\definecolor{color1}{rgb}{0.894118,0.101961,0.109804}
\definecolor{color2}{rgb}{0.215686,0.494118,0.721569}
\newcommand{\reals}{\mathbb{R}}
\newcommand{\exponent}{\operatorname{e}}
\newcommand{\diag}{\operatorname{diag}}
\newcommand{\trace}{\operatorname{tr}}
\newtheoremstyle{sfstyle}% name of the style to be used
  {\topsep}% measure of space to leave above the theorem. E.g.: 3pt
  {\topsep}% measure of space to leave below the theorem. E.g.: 3pt
  {\itshape}% name of font to use in the body of the theorem
  {}% measure of space to indent
  {\sffamily\bfseries\small}% name of head font
  {.}% punctuation between head and body
  {.5em}% space after theorem head; " " = normal interword space
  {}% Manually specify head
\theoremstyle{sfstyle}
\newtheorem{definition}{Definition}
\newtheorem{proposition}{Proposition}
\newtheorem{theorem}{Theorem}
\newtheorem{lemma}{Lemma}
\newtheorem{assumption}{Assumption}
\def \gentitle {Dynamical Models Explaining Social Balance and Evolution of Cooperation}
\begin{document}

  \author{V.A. Traag}
  \affiliation{ICTEAM, Universit\'e catholique de Louvain}
  \author{P. Van Dooren}
  \affiliation{ICTEAM, Universit\'e catholique de Louvain}
  \author{P. De Leenheer} 
  \affiliation{Department of Mathematics, University of Florida}

  \title{\gentitle}

  \begin{abstract}
    Social networks with positive and negative links often split into two
    antagonistic factions. Examples of such a split abound: revolutionaries versus
    an old regime, Republicans versus Democrats, Axis versus Allies during the
    second world war, or the Western versus the Eastern bloc during the Cold War.
    Although this structure, known as social balance, is well understood, it is not
    clear how such factions emerge. An earlier model could explain the formation of
    such factions if reputations were assumed to be symmetric.  We show this is
    not the case for non-symmetric reputations, and propose an alternative model
    which (almost) always leads to social balance, thereby explaining the tendency
    of social networks to split into two factions.  In addition, the alternative
    model may lead to cooperation when faced with defectors, contrary to the earlier
    model. The difference between the two models may be understood in terms of the
    underlying gossiping mechanism: whereas the earlier model assumed that an
    individual adjusts his opinion about somebody by gossiping about that person with
    everybody in the network, we assume instead that the individual gossips with
    that person about everybody. It turns out that the alternative model is
    able to lead to cooperative behaviour, unlike the previous model.
  \end{abstract}

  \maketitle

\section{Introduction}
Why do two antagonistic factions emerge so frequently in social networks?  This
question was already looming in the 1940s, when Heider~\cite{Heider1946}
examined triads of individuals in networks, and postulated  that only balanced
triads are stable. A triad is balanced when friends agree in their opinion of a
third party, while foes disagree, see Fig.~\ref{fig:soc_balance}.  The
individuals in an unbalanced triad have an incentive to adjust their opinions so
as to reduce the stress experienced in such a situation~\cite{Bearman2004}. Once
an adjustment is made, the triad becomes balanced, and the stress disappears. 

A decade later, Harary~\cite{Harary1953} showed that a complete social network
splits in at most two factions if and only if all its triads are balanced, see
also~\cite{Cartwright1956}. Such networks are called (socially) balanced as
well. Since then, the focus of much of the research has been on detecting such
factions in signed networks~\cite{Networks1996,Traag2009}. Many signed networks
show evidence of social balance, although the split into factions might not be
exact, that is, they are only nearly socially
balanced~\cite{Szell2010,Leskovec2010,Facchetti2011,Kunegis2002}.

What has been lacking until fairly recently, are dynamical models that explain
{\it how} social balance emerges. The purpose of this paper is to analyse two
such models. One of these models, proposed first in~\cite{Kulakowski2005}, was
proved to exhibit social balance in~\cite{Marvel2011}. However, this was done
under a restrictive symmetry assumption for  the reputation matrix. Here, we
continue the analysis of this model and show that it   generically does not lead
to social balance when the symmetry assumption is dropped.  In contrast, we
propose a second model that is based on a different underlying gossiping
mechanism, and show that it generically does lead to social balance, even when
reputations are not symmetric.

Moreover, there is a natural connection between negative links and the evolution
of cooperation: we consider positive links as indicating cooperation and
negative links as defection. We will show that our alternative model is able to
lead to cooperation, whereas the earlier model cannot.

\begin{figure}[t]
  \begin{center}
    \includegraphics{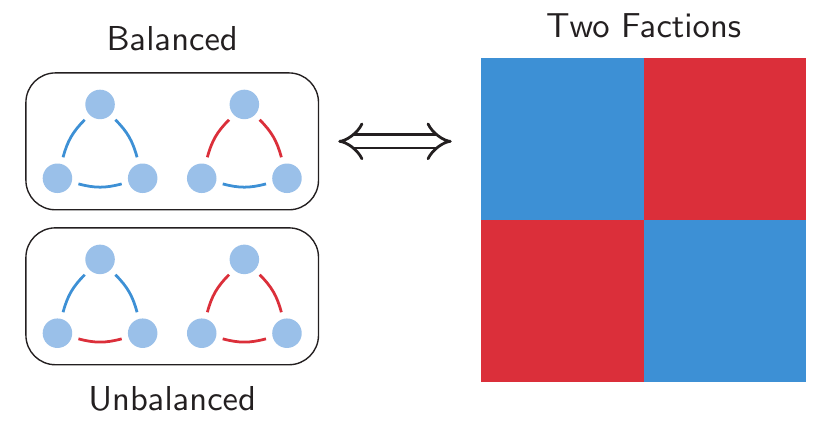}
  \end{center}
  \caption{\textbf{Social Balance.} The two upper triads are balanced, while the
  two lower triads are unbalanced. According to the structure
  theorem~\cite{Harary1953}, a complete graph can be split into (at most) two
  opposing factions, if and only if all triads are balanced. This is represented
  by the coloured matrix on the right, where blue indicates positive entries, and
  red negative entries.\label{fig:soc_balance}}
\end{figure}

\section{Earlier model}

Certain discrete-time, stochastic dynamics have been
investigated~\cite{Antal2005,Radicchi2007}, but they exhibit so-called jammed
states~\cite{Marvel2009}: no change in the sign of a reputation improves the
degree of social balance, as measured by the total number of balanced triads in
the network. A surprisingly simple continuous-time model~\cite{Kulakowski2005}
was proved to converge to social balance for certain symmetric initial
conditions~\cite{Marvel2011}. The authors assume that the social network is
described by a complete graph (everybody is connected to everybody), with
weighted links representing reputations that change continuously in time. Let
$X$ denote the real-valued matrix of the reputations, so that $X_{ij}$
represents the opinion $i$ has about $j$.  It is positive whenever $i$ considers
$j$ a friend, and negative if $i$ thinks of $j$ as an enemy. The network is
balanced,  if, up to a possible relabelling of the individuals, the sign
structure of $X$ takes one of two possible block forms:
\begin{equation}\label{sign-structure}
\begin{pmatrix}
+
\end{pmatrix}\textrm{ or }
\begin{pmatrix}
+&-\\
-&+
\end{pmatrix}.
\end{equation}
Changes in the reputations are modelled as follows:
\begin{equation}
  \dot{X} = X^2,~\textrm{~or~}~\dot{X}_{ij} = \sum_{k} X_{ik} X_{kj},
  \label{equ:X^2}
\end{equation}
where $\dot{X}$ denotes the derivative with respect to time of the matrix $X$.
The idea behind this model is that reputations are adjusted based on the outcome
of a particular gossiping process.  More specifically, suppose that Bob
(individual $i$) wants to revise his opinion about John (individual $j$).  Bob
then asks everybody else in the network what they think of John. If one such
opinion $X_{kj}$ has the same sign as the opinion Bob has about his gossiping
partner, i.e. as $X_{ik}$, then Bob will increase his opinion about John.  But
if these opinions differ in sign, then Bob will decrease his opinion about John.

\begin{figure*}[t]
  \begin{center}
    \includegraphics[width=\textwidth]{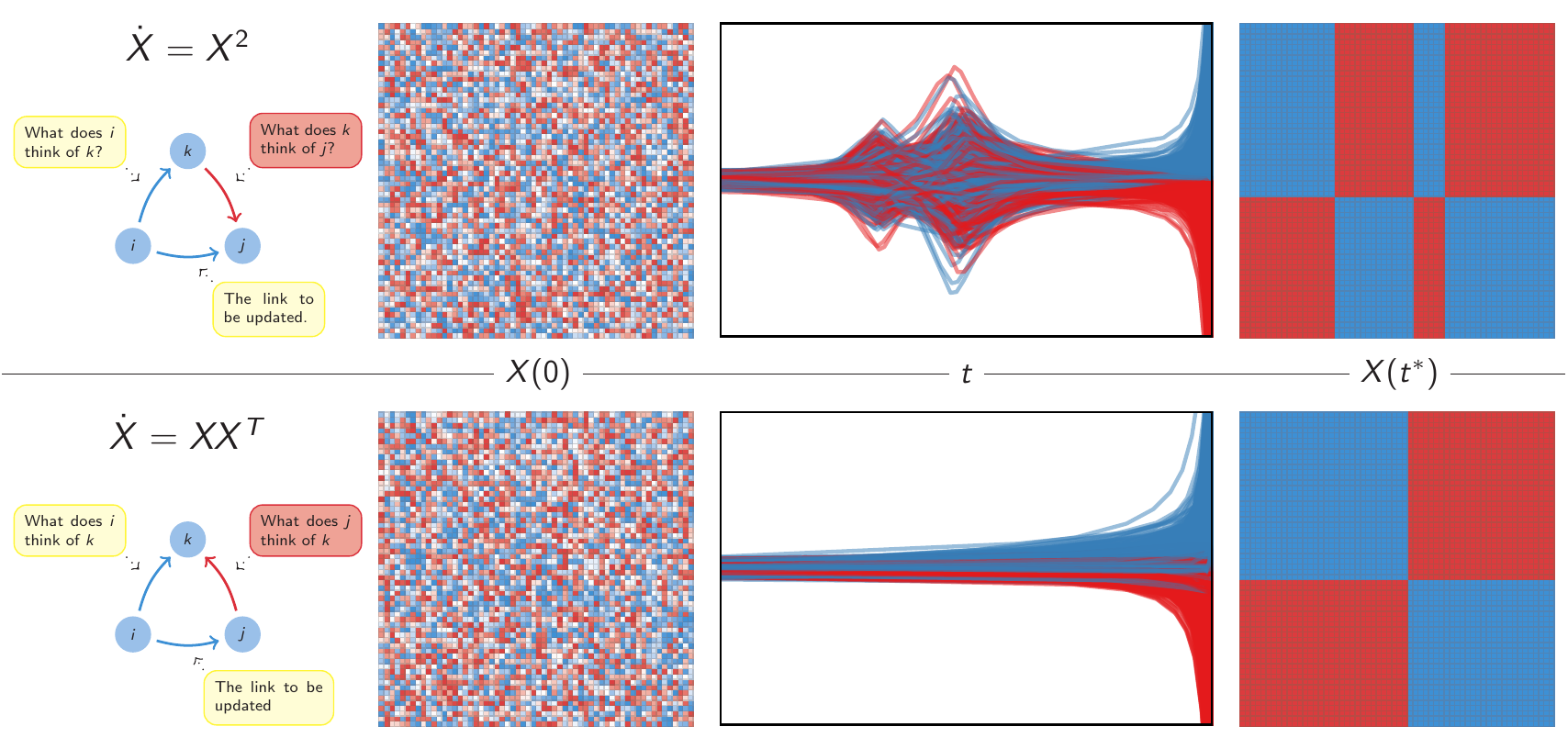}
  \end{center}
  \caption{\textbf{The two models compared.} The first row illustrates what happens
  generically for the model $\dot{X}=X^2$, while the second row displays the
  results for $\dot{X}=XX^T$. Each row contains from left to right: (1) an
  illustration of the model; (2) the random initial state; (3) the dynamics of
  the model; and (4) the final state to which the dynamics converge. Blue
  indicates positive entries, and red negative entries. Although the first model
  converges to a rank one matrix, it is not socially balanced.  The second model
  does converge generically to social balance. The small bumps in the dynamics
  for $\dot{X} = X^2$ are due to complex eigenvalues that show circular
  behaviour (see Fig.~\ref{phase}).\label{fig:illustration}}
\end{figure*}

The analysis for symmetric initial conditions $X(0)=X^T(0)$ was carried out
in~\cite{Marvel2011}: First, $X(0)$ is diagonalized by an orthogonal
transformation $X(0) = U \Lambda(0) U^T$, where the columns of $U$ are
orthonormal  eigenvectors $u_1, \ldots, u_n$ of $X(0)$ so that $UU^T=I_n$, and
$\Lambda(0)$ is a diagonal matrix whose diagonal entries are the corresponding
real eigenvalues $\lambda_1(0) \geq \lambda_2(0) \geq \cdots \geq \lambda_n(0)$
of $X(0)$. Direct substitution of the matrix function $U\Lambda(t)U^T$ shows
that it is the solution of Eq.~\ref{equ:X^2} with initial condition $X(0)$.
Here, $\Lambda(t)$ is a diagonal matrix, solving the uncoupled matrix equation
$\dot{\Lambda} = \Lambda^2$ with initial condition $\Lambda(0)$. The diagonal
entries of $\Lambda(t)$ are obtained by integrating the scalar first order
equations ${\dot \lambda}_i=\lambda_i^2$:
\begin{equation}
  \lambda_i(t) = \frac{\lambda_i(0)}{1 - \lambda_i(0)t},\;\; t\in \begin{cases}[0,+\infty)\textrm{ if }\lambda_i(0)\leq 0\\
[0,1/\lambda_i(0))\textrm{ if } \lambda_i(0)>0
\end{cases}
  \label{equ:real_eig}
\end{equation}
Hence, the solution $X(t)$ blows up in finite time if and only if
$\lambda_1(0)>0$. Moreover, if $\lambda_1(0)>0$ is a simple eigenvalue, then the
solution $X(t)$, normalized by its Frobenius norm, satisfies:
\begin{equation}\label{lim-rank1}
\lim_{t \to 1/\lambda_1(0)}\frac{X(t)}{|X(t)|_F} = u_1 u_1^T.
\end{equation}
Assuming that $u_1$ has no zero entries, and up to a suitable permutation of its
components, the latter limit takes one of the forms in Eq.~\ref{sign-structure}.
In other words, if the initial reputation matrix is symmetric and has a simple,
positive eigenvalue, then the normalized reputation matrix becomes balanced in
finite time.

Our first main result is that this conclusion remains valid for normal initial
conditions, i.e. for initial conditions that satisfy the equality $X(0)X^T(0) =
X^T(0)X(0)$, see SI Theorem~\ref{main-X^2}. Whereas the real eigenvalues
behave similar to the symmetric case, the complex eigenvalues show circular
behaviour, which results in small ``bumps'' in the dynamics as shown in
Fig.~\ref{fig:illustration} (see Fig.~\ref{phase} for more detail).  More
precisely, if $X(0)$ is normal and if $\lambda_1(0)$ is a real, positive and
simple eigenvalue which is larger than every other real eigenvalue (if any),
then the solution $X(t)$ of Eq.~\ref{equ:X^2} satisfies Eq.~\ref{lim-rank1}.
Hence, once again, the normalized reputation matrix converges to a balanced
state.

Our second main result is that this conclusion does not carry over to the case
where $X(0)$ is not normal, see SI Theorem~\ref{main2-X^2}. This general case
is analysed by first transforming $X(0)$ into its real Jordan-canonical form
$J(0)$:  $X(0)=TJ(0)T^{-1}$, where $T$ consists of a basis of (the real and
imaginary parts of) generalized eigenvectors of $X(0)$. It can then be shown
that the solution $X(t)$ of Eq.~\ref{equ:X^2} is given by $TJ(t)T^{-1}$, where
$J(t)$ solves the matrix equation ${\dot J}=J^2$, an equation which can still be
solved explicitly. Hence, $X(t)$ can still be determined. It turns out that if
$X(0)$ has a real, positive and simple eigenvalue $\lambda_1(0)$ which is larger
than every other real eigenvalue (if any), then the normalized reputation matrix
satisfies:
\begin{equation}\label{lim-rank1-no-balance}
\lim_{t \to 1/\lambda_1(0)}\frac{X(t)}{|X(t)|_F} = \frac{u_1 v_1^T}{|u_1v_1^T|_F},
\end{equation}
where $u_1$ and $v_1^T$ are left and right eigenvectors of $X(0)$ respectively,
that correspond to the eigenvalue $\lambda_1(0)$. If we assume that none of the
entries of $u_1$ and $v_1$ are zero, then we can always find a suitable
permutation of the components of $u_1$ and $v_1$ such that they have the
following sign structure:
$$
u_1=\begin{pmatrix}
+\\
+\\
\hline
-\\
-
\end{pmatrix}\textrm{~and~} v_1^T=\begin{pmatrix}
+&-&\vline&+&-
\end{pmatrix}
$$
Consequently, in general, the matrix limit in Eq.~\ref{lim-rank1-no-balance} has the sign structure:
$$
\begin{pmatrix}
+&-&\vline&+&-\\
-&+&\vline&-&+
\end{pmatrix},
$$ 
as illustrated in Fig.~\ref{fig:illustration}. Clearly, this configuration doesn't correspond 
to social balance any longer.

\section{Alternative model}

Let us briefly reconsider the gossiping process underlying model $\dot{X} =
X^2$. In our example of Bob and John, the following happens. Bob asks others
what they think of John. Bob takes into account what he thinks of the people he
talks to, and adjusts his opinion of John accordingly.  An alternative approach
is to consider a type of homophily
process~\cite{McPherson2001,Durrett2005,Fu2012}: people tend to befriend people
who think alike. When Bob seeks to revise his opinion of John, he talks to John
about everybody else (instead of talking to everybody else about John). For
example, suppose that Bob likes Alice, but that John dislikes her.  When Bob and
John talk about Alice, they notice they have opposing views about her, and
as a result the relationship between Bob and John deteriorates. On the other
hand, should they share similar opinions about Alice,  
their relationship will improve. Thus, our alternative model for the update law
of the reputations is:
\begin{equation} \dot{X} =
  XX^T,\textrm{~or~} \dot{X}_{ij} = \sum_k X_{ik}X_{jk}.  \label{equ:XX^T}
\end{equation} 
Although there apparently is only a subtle difference in the gossiping processes
underlying the models in Eq.~\ref{equ:X^2} and \ref{equ:XX^T}, these models turn
out to behave quite differently, as we discuss next.

Our third main result is that for generic initial conditions, the normalized
solution of system Eq.~\ref{equ:XX^T} converges to a socially balanced state in
finite time. To show this, we decompose the solution $X(t)$ into its symmetric
and skew-symmetric parts: $X(t) = S(t) + A(t)$, where $S(t)=S^T(t)$ and
$A(t)=-A^T(t)$. Since $\dot{X} =\dot{X}^T$, the skew-symmetric part remains
constant, and therefore $A(t) = A(0)\equiv A_0$. The symmetric part then obeys
the matrix Riccati differential equation $\dot{S} = (S + A_0)(S - A_0)$.  We
introduce $Z(t) = e^{-A_0t}S(t)e^{A_0t}$ to eliminate the linear terms in this
equation, and obtain 
\begin{align}
  \dot{Z} = %e^{-A_0t} (-A_0 S + \dot{S} + S A_0) e^{A_0t} \\ 
  Z^2 + A_0A_0^T.
\end{align}
The latter matrix Riccati differential equation can be integrated, yielding  the
solution $Z(t)$ explicitly, and hence $S(t)$, as well as $X(t)$, can be
calculated. 

It turns out that if $A_0\neq 0$, then $X(t)$ always blows up in finite time.
Moreover, using a perturbation argument, it can be shown there is a dense set of
initial conditions $X(0)$ such that the normalized solution of
Eq.~\ref{equ:XX^T} converges to 
\begin{equation}\label{lim-rank1-XX^T}
\lim_{t \to t^*}\frac{X(t)}{|X(t)|_F} = w w^T,
\end{equation}
for some vector $w$, as $t$ approaches the blow-up time $t^*$, see SI
Theorem~\ref{thm:generic}. If $w$ has no zero entries, this implies that the
normalized solution becomes balanced in finite time. Hence, the alternative
model in Eq.~\ref{equ:XX^T} generically evolves to social balance, see
Fig.~\ref{fig:illustration}.

\section{Evolution of Cooperation}

\begin{figure}[t]
  \begin{center}
    \includegraphics{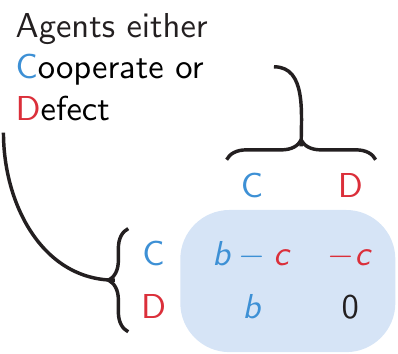}
  \end{center}
  \caption{\textbf{Prisoner's Dilemma.} Both players have the option to either
  Cooperate or Defect. Whenever an agent
  cooperates, it costs him $c$ while his partners receives a benefit $b > c$,
  leading to the indicated payoffs.\label{fig:PD}}
\end{figure}

Positive and negative links have a natural interpretation in the light of
cooperation: positive links indicate cooperation and negative links indicate
defection.  There is then also a natural motivation for the alternative model in
terms of cooperation. Again, suppose Bob wants to revise his opinion of John.
For Bob it is important to know whether John is cooperative in order to
determine whether he should cooperate with John or not. So, instead of asking
Alice whether she has cooperated with John, Bob would like to know whether John
has cooperated with her. In other words, Bob is not interested in $X_{kj}$ but
in $X_{jk}$, consistent with Eq.~\ref{equ:XX^T}, illustrated in
Fig.~\ref{fig:illustration}.  This is also what is observed in studies on
gossip: it often concerns what others did, not what one thinks of
others~\cite{Mcandrew2007,Paine1967}

Indeed gossiping seems crucial in explaining the evolution of human cooperation
through indirect reciprocity~\cite{Nowak2005}. It has even been suggested that
humans developed larger brains in order to gossip, so as to control the problem
of cooperation through social interaction~\cite{Dunbar1998}. In general, the
problem is that if defection allows individuals to gain more, why then do
individuals cooperate? This is usually modelled in the form of a prisoner's
dilemma, in which each agent has the possibility to give his partner some
benefit $b$ at some cost $c<b$. So, if an agent's partner cooperates (he gives
the agent $b$) but the agent doesn't cooperate (he doesn't pay the cost $c$)
his total payoff will be $b$. Considering the other possibilities results in the
payoff matrix detailed in Fig.~\ref{fig:PD}.

Irrespective of the choice of the other player, it is better to defect in a
single game. Suppose that the second player cooperates. Then if the first player
cooperates he gains $b-c$, while if he defects he gains $b$, so defecting is
preferable.  Now suppose that the second player defects. The first player then
has to pay $c$, but doesn't have to pay anything when defecting. So indeed, in a
single game, it is always better to defect, yet the payoff is higher if both
cooperate, whence the dilemma.

In reality, we do observe cooperation, and various mechanisms for explaining the
evolution of cooperation have been suggested~\cite{Nowak2006a}, such as kin
selection~\cite{MaynardSmith1982, Hamilton1964}, reciprocity~\cite{Axelrod1985}
or group selection~\cite{Wilson1975}. Humans have a tendency however to also
cooperate in contexts beyond kin, group or repeated interactions. It is believed
that some form of indirect reciprocity can explain the breadth of human
cooperation~\cite{Nowak2005}. Whereas in direct reciprocity the favour is
returned by the interaction partner, in indirect reciprocity the favour is
returned by somebody else, which usually involves some reputation. It has been
theorized that such a mechanism could even form the basis of
morality~\cite{Alexander1987}. Additionally, reputation (and the fear of losing
reputation) seems to play an important role in maintaining social
norms~\cite{Elias1965, Friedkin2001,Fehr2004}.

In general, the idea is the following: agents obtain some good reputation by
helping others, and others help those with a good reputation. Initially a
strategy known as image scoring was introduced~\cite{Nowak1998}. Shortly after,
it was argued that a different strategy, known as the standing strategy, should
actually perform better~\cite{Leimar2001}, although experiments showed people
tend to prefer the simpler image scoring strategy~\cite{Milinski2001}. This led
to more systematic studies of how different reputation schemes would
perform~\cite{A2006a,Brandt2004,A2004}. Although much research has been done on
indirect reciprocity, only few theoretical works actually study how gossiping
shapes reputations~\cite{Nakamaru2004,Traag2011a}. Nonetheless, most studies
(tacitly) assume that reputations are shaped through gossip. Additionally, it
was observed experimentally that gossiping is an effective mechanism for
promoting cooperation~\cite{Piazza2008,Sommerfeld2009,Semmann2007}.

Moreover, these reputations are usually considered as objective. That is, all
agents know the reputation $X_j$ of some agent $j$, and all agents have the same
view of agent $j$. Private reputations---so that we have $X_{ij}$, the
reputation of $j$ in the eyes of $i$---have usually been considered by allowing
a part of the population to ``observe'' an interaction, and update the
reputation accordingly.  If too few agents are allowed to ``observe'' an
interaction, the reputations $X_{ij}$ tend to become uncorrelated and
incoherent. This makes reputation unreliable for deciding whether to cooperate
or defect. The central question thus becomes how to model private reputations
such that they remain coherent and reliable for deciding whether to cooperate or
not.

Dynamical models of social balance might provide an answer to this question.
Although it allows to have private reputations---that is $X_{ij}$---the dynamics
could also lead to some coherence in the form of social balance. In addition, it
models more explicitly the gossiping process, commonly suggested to be the
foundation upon which reputations are forged. 

\begin{figure*}[t]
  \begin{center}
    \includegraphics{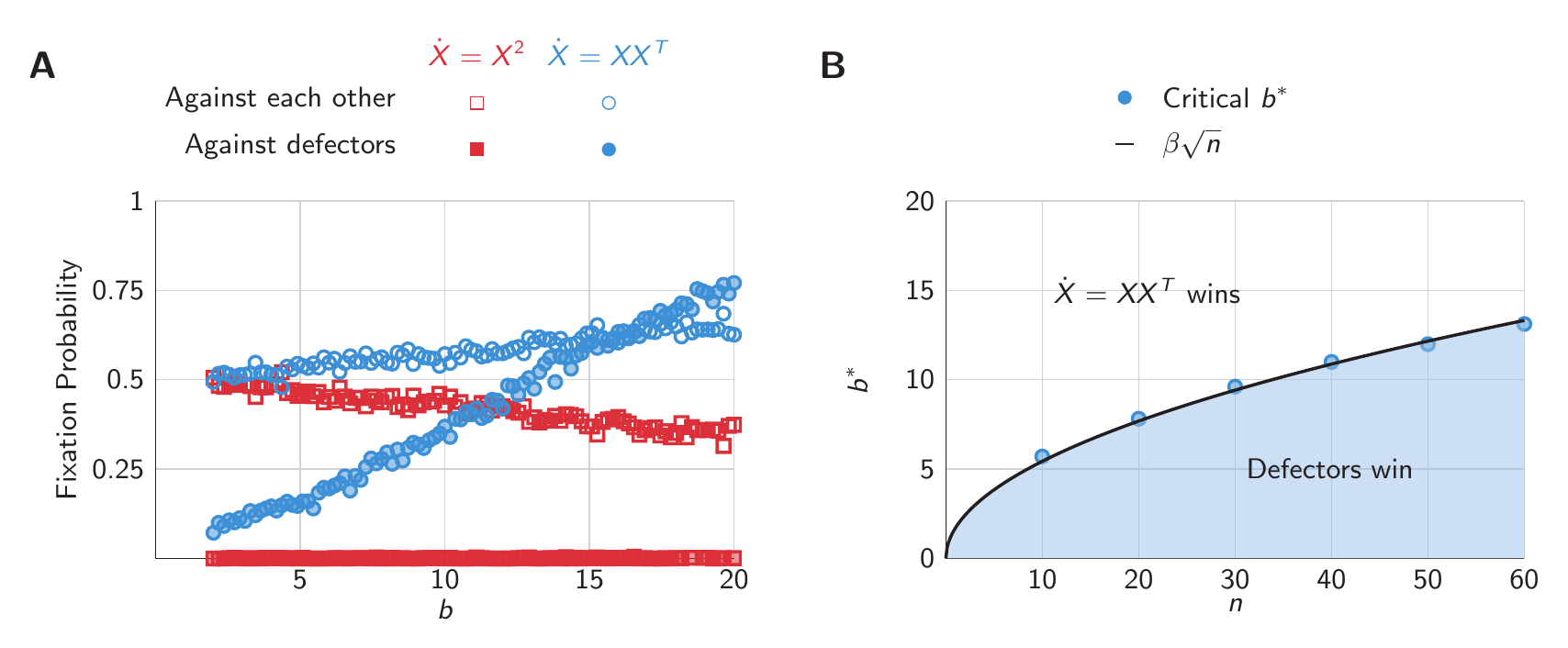}
  \end{center}
  \caption{\textbf{Evolution of Cooperation.} (A) The fixation probability
  (probability to be the sole surviving species) is higher for model
  $\dot{X}=XX^T$ than $\dot{X}=X^2$. This implies that the model $\dot{X}=XX^T$
  is more viable against defectors, and has an evolutionary advantage compared
  to $\dot{X}=X^2$. (B) The point $b^*$ at which the model $\dot{X} = XX^T$ has
  an evolutionary advantage against defectors (i.e. the fixation probability
  $\rho > 1/2$) depends on the number of agents $n$. The condition for the model
  $\dot{X} = XX^T$ to defeat defectors can be approximated by $b > b^* = \beta
  \sqrt{n}$, with $\beta \approx 1.72$.\label{fig:evol}}
\end{figure*}

\subsection{Simulation Results}

The reputations of the agents are determined by the dynamics of the two models.
We call agents using $\dot{X} = X^2$ dynamics type A, and those using $\dot{X} =
X X^T$ dynamics type B. We assume that agent $i$ cooperates with $j$ whenever
$X_{ij} > 0$ and defects otherwise. Agents reproduce proportional to their
fitness, determined by their payoff. Agents that do well (have a high payoff)
have a higher chance of reproduction, and we are interested in knowing the
probability that a certain type becomes fixated in the population (i.e.  takes
over the whole population), known as the fixation probability $\rho$. All
simulations start off with an equal amount of agents, so if some type wins more
often than his initial relative frequency, it indicates it has an evolutionary
advantage. For the results presented here this comes down to $\rho > 1/2$. More
details on the simulations are provided in the Materials and Methods section at
the end of the paper. 

The results are displayed in Fig.~\ref{fig:evol} using a normalized cost of
$c=1$ (the ratio $b/c$ drives the evolutionary dynamics, see Materials and
Methods and~\cite{Nowak2006a}). When directly competing against each other, type
B has an evolutionary advantage (its fixation probability $\rho_B > 1/2$)
compared to type A, already for relatively small benefits. When each type is
playing against defectors (agents that always defect), type A seems unable to
defeat defectors ($\rho_A < 1/2$) for any $b < 20$, while type B performs quite
well against them. When all three types are playing against each other results
are similar (see Fig.~\ref{fig:defect_ABD}). When varying the number of
agents, the critical benefit $b^*$ at which type B starts to have an
evolutionary advantage changes (i.e. where the fixation probability $\rho_B =
1/2$). For $b > b^*$ agents using the model $\dot{X} = XX^T$ have a higher
chance to become fixated, while for $b < b^*$ defectors tend to win. The
inequality for type B to have an evolutionary advantage can be relatively
accurately approximated by $b > b^* = \gamma \sqrt{n}$ where $\gamma$ is estimated
to be around $\gamma \approx 1.72 \pm 0.037$ (95\% confidence interval). Varying
the intensity of selection does not alter the results qualitatively (see
Fig.~\ref{fig:beta}).  Summarizing, type B is able to lead to cooperation and
defeats type A. Based on these results, if a gossiping process evolved during
the course of human history in order to maintain cooperation, the model $\dot{X}
= XX^T$ seems more likely to have evolved than $\dot{X} = X^2$.  For smaller
groups a smaller benefit is needed for the model $\dot{X} = XX^T$ to become
fixated. This dependence seems to scale only as $\sqrt{n}$, so that larger
groups only need a marginally larger benefit in order to develop cooperation.

\section{Conclusion}

To conclude, we have shown that the alternative model $\dot{X} = XX^T$ generically 
converges to social balance, whereas the model $\dot{X} = X^2$ did not.
The current models exhibit several unrealistic features, we would like to
address: (1) an all-to-all topology; (2) dynamics that blow-up in finite time;
and (3) homogeneity of all agents. Although most of these issues can be
addressed by specifying different dynamics, the resulting models are much more
difficult to analyse, thereby limiting our understanding. Although the two
models are somewhat simple, they are also tractable, and what we lose in
truthfulness, we gain in deeper insights: in simplicity lies progress. Our
current analysis offers a quite complete understanding, and we hope it provides
a stepping stone to more realistic models, which we would like to analyse in the
future.

The difference between the two models can be understood in terms of gossiping:
we assume that people who wish to revise their opinion about someone talk to
that person about everybody else, while the earlier model assumed that people
talk about that person to everybody else. Both gossiping and social balance are
at the centre of many social
phenomena~\cite{Gluckman1963,Elias1965,Fostera,Dunbar1998}, such as norm
maintenance~\cite{Friedkin2001}, stereotype formation~\cite{Wert} and social
conflict~\cite{Labianca1998}. For example, a classic work~\cite{Elias1965} on
the established and outsiders found that gossiping was the fundamental driving
force for the maintenance of the cohesive network of the established at the
exclusion of the outsiders. Understanding how social balance may emerge might
help to understand the intricacies of these social phenomena.

Moreover, in light of the evolution of cooperation it appears that agents using
$\dot{X} = XX^T$ dynamics perform well against defectors, and have an
evolutionary advantage compared to agents using $\dot{X} = X^2$ dynamics.
Contrary to other models of indirect reciprocity, not everybody might end up
cooperating with everybody, and the population may split into two groups. This
provides an interesting connection between social balance theory, gossiping and
the evolution of cooperation. Our results improve our understanding of gossiping
as a mechanism for group formation and cooperation, and as such contributes to
the study of indirect reciprocity.

\begin{acknowledgments}

    We acknowledge support from a grant ``Actions de recherche concert\'ees
    --- Large Graphs and Networks'' of the ``Communaut\'e Fran\c caise de
    Belgique'' and from the Belgian Network DYSCO (Dynamical Systems, Control,
    and Optimization), funded by the Interuniversity Attraction Poles
    Programme, initiated by the Belgian State, Science Policy Office. P.D.L.
    wishes to thank VLAC (Vlaams Academisch Centrum) for providing him with a
    research fellowship during a sabbatical leave from the University of
    Florida, the Universit{\'e} catholique de Louvain for hosting him as a
    visiting professor, and the University of Florida for a Faculty
    Enhancement Opportunity.

\end{acknowledgments}

\bibliography{bibliography}

\section*{Materials and Methods}

% Already renew figures
\renewcommand{\thefigure}{S\arabic{figure}}    
\setcounter{figure}{0}

\begin{figure}[t] 
  \begin{center} 
    \includegraphics{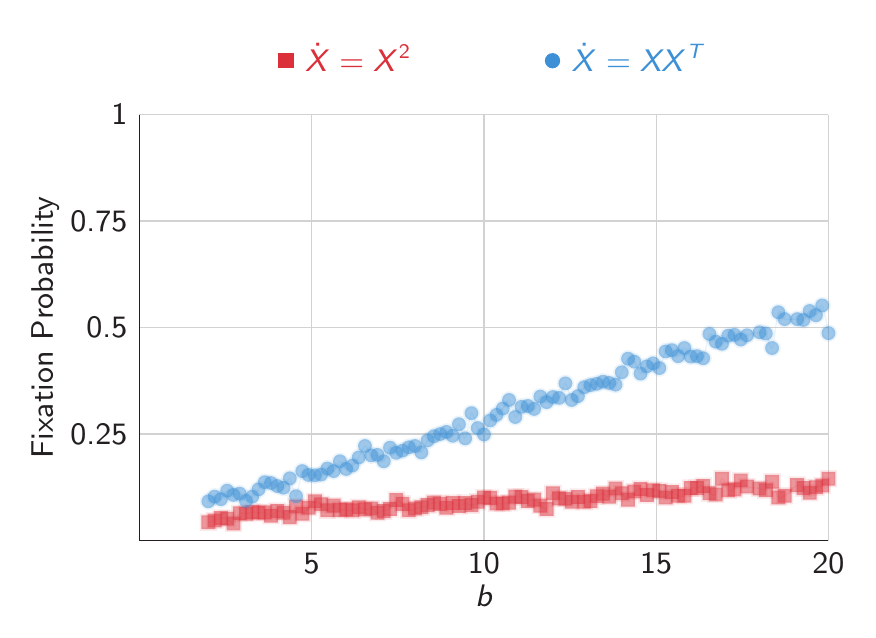} 
  \caption{Results including type A, B and defectors.\label{fig:defect_ABD}}
  \end{center}
\end{figure}

\begin{figure*}[t]
  \begin{center}
    \includegraphics[width=\textwidth]{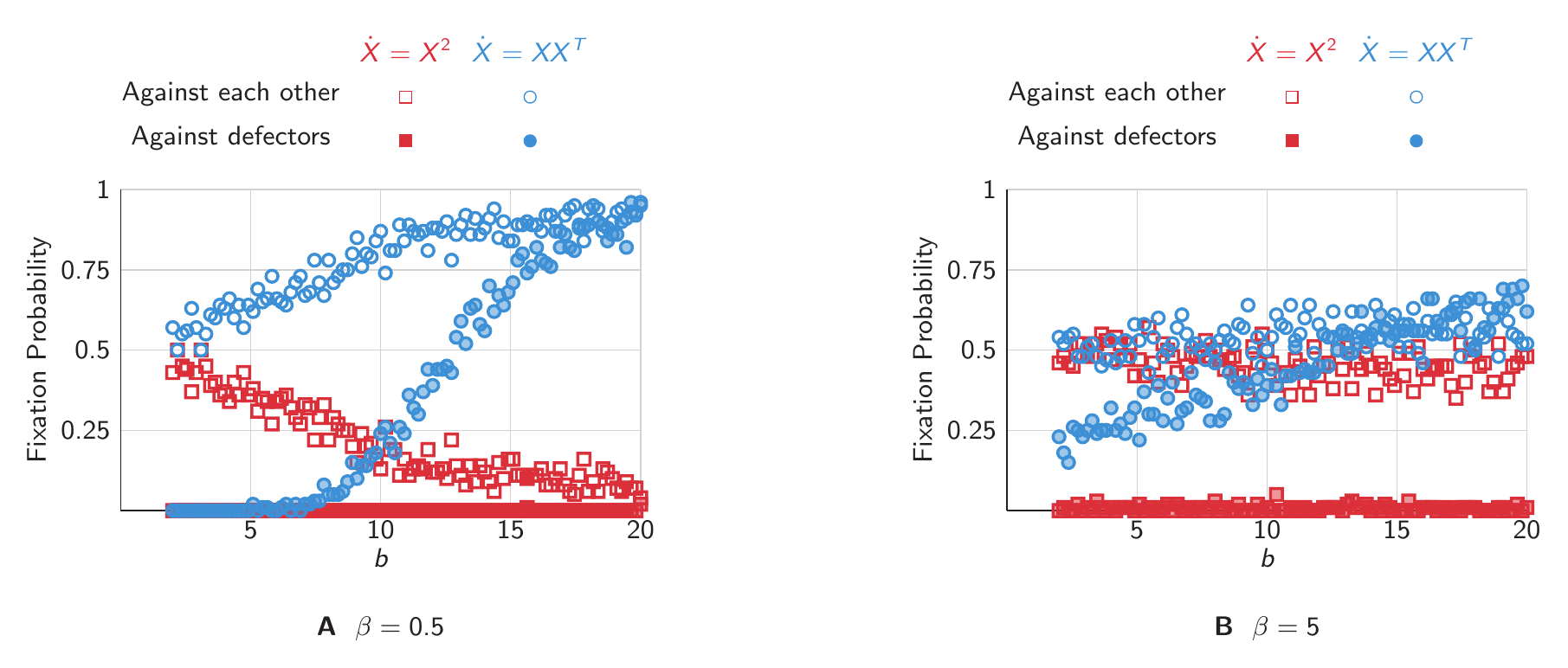}
  \caption{Results different intensities of selection.\label{fig:beta}}
  \end{center}
\end{figure*}

In the simulations of the evolution of cooperation, the dynamics consist of
two parts: (1) the interaction dynamics within each generation; and (2) the
dynamics prescribing how the population evolves from generation to generation.

\subsection{Interaction Dynamics} 
We include three possible types of agents in our simulations:
\begin{description}[font=\sffamily\small]
  \item[Type A] uses $\dot{X} = X^2$ dynamics,
  \item[Type B] uses $\dot{X} = XX^T$ dynamics, and
  \item[Defectors] have trivial reputation dynamics ${\dot X}=0$, with
    negative constant reputations.
\end{description}
We can decompose the reputation matrix $X(t)$ accordingly into three parts: 
\begin{equation*} 
  X(t) = \left(\begin{matrix} X_A(t) \\ X_B(t) \\ X_D(t) \end{matrix}\right), 
\end{equation*}
where $X_A(t)$ are the reputations of all agents in the eyes of agents of type
A, $X_B(t)$ for type B and $X_D(t)$ for defectors. The reputations $X_A(0)$ and
$X_B(0)$ are initialized from a standard Gaussian distribution. The initial
reputation for $X_D(0)$ will be set to a fixed negative value. To be clear,
$X_D(0)$ is the reputation of all other agents in the eyes of defectors, which
is negative initially. The initial reputation of the defectors themselves
is of course not necessarily negative initially. For the results displayed here
we have used $X_D(0) = -10$, but results remain by and large the same when
varying this parameter, as long as it remains sufficiently negative. 

Since we are dealing with continuous dynamics in this paper, we assume all
agents are involved in infinitesimally short games at each time instance $t$.
Each agent $i$ may choose to either cooperate or defect with another agent $j$,
and this decision may vary from one agent to the next. For agents of type A and
type B the decision to cooperate is based on the reputation: they defect
whenever $X_{ij}(t) \leq 0$ and cooperate whenever $X_{ij}(t) > 0$. We define
the cooperation matrix $C(t)$ accordingly 
\begin{equation*} 
  C_{ij}(t) =
  \begin{cases} 
    0 & \text{if~} X_{ij} \leq 0 \\ 
    1 & \text{if~} X_{ij} > 0 \\
  \end{cases} 
\end{equation*} 
Defectors will simply always defect. Whenever an agent $i$ cooperates with $j$
the latter receives a payoff of $b$ at a cost of $c$ to agent $i$. We integrate
this payoff over all infinitesimally short games from time $0$ to time $t^*$,
which can be represented as
\begin{equation*}
  P(g) = \frac{1}{n}\int_0^{t^*} b C(t)^T e - c C(t) e dt,
\end{equation*}
where $e = (1,\ldots,1)$ the vector of all ones for a certain generation $g$. 

\subsection{Evolutionary Dynamics} 
We have simulated the evolution of cooperation for $n=10,20,\ldots,60$ agents, which stays
constant throughout evolution. We consider four different schemes for
initializing the first generation:
\vskip1em
\begin{tabular}{lllll}
     &                        & $p_A(0)$ & $p_B(0)$ & $p_D(0)$ \\
  1) &       Type A vs Type B &    $1/2$ &    $1/2$ &        - \\
  2) &    Type A vs Defectors &    $1/2$ &        - &    $1/2$ \\
  3) &    Type B vs Defectors &        - &    $1/2$ &    $1/2$ \\
  4) & Type A,B and Defectors &    $1/3$ &    $1/3$ &    $1/3$ 
\end{tabular}
\vskip1em
\noindent Here $p_A(0)$,$p_B(0)$ and $p_D(0)$ are respectively the proportion of
agents of type A, type B and defectors in the first generation. We use the
vector $T_i(g) \in \{A,B,D\}$ to denote the type of agent $i$ in generation $g$,
so that $T_i(g) = A$ if agent $i$ is a type A player, $T_i(g) = B$ for a type B
player, and $T_i(g) = D$ for a defector. We are interested in estimating the
probability that a single type takes over the whole population, known as the
fixation probability $\rho_A$, $\rho_B$ and $\rho_D$ for the three different
types. If a type has no evolutionary advantage, it is said to be evolutionary
neutral, and in that case its fixation probability is equal to its initial
frequency, e.g. for type A $\rho_A = p_A(0)$.

We will keep the population constant at the initial $n$, and simply choose $n$ new
agents according to their payoff for the next generation. This can be thought of
as choosing $n$ times one of the $n$ agents in the old generation for
reproduction. Let $\phi_i$ denote the probability that an agent is selected for
reproduction, which we define as
\begin{equation*}
  \phi_i = \frac{e^{\beta P_i(g)}}{\sum_i e^{\beta P_i(g)}}. 
\end{equation*}
Since we are only interested in the number of agents of a certain type, we can
also gather all payoffs for the same type of agents, and write
\begin{equation*}
  \Phi_q = \sum_{i : T_i(g) = q} \phi_i,
\end{equation*}
where $q \in \{A,B,D\}$ represents the type of agent. The probability to select a
type A agent, a type B agent or a defector is then respectively $\Phi_A$,
$\Phi_B$ and $\Phi_D$.  In the next generation, the probability that agent $i$
is of a specific type $q$ can then be written as 
\begin{equation*}
  \Pr(T_i(g+1) = q) = \Phi_q.
\end{equation*}
This evolutionary mechanism can be seen as a Wright-Fisher
process~\cite{Gillespie2004} with fitnesses $e^{\beta P_i(g)}$. It is well known
that this process converges faster than a Moran birth-death process, since it
essentially takes $n$ time steps in a Moran process to reproduce the effect of
one time step in a Wright-Fisher process~\cite{Gillespie2004}.  Because of the
high computational costs (solving repeatedly a non-linear system of differential
equations of size $n^2$), this process is preferable.

Higher $\beta$ signifies higher selective pressure, and leads to a higher
reproduction of those with a high payoff, and in the case that $\beta \to
\infty$ only those with the maximum payoff reproduce. On the other hand, for
$\beta \to 0$ this tends to the uniform distribution $\phi_i = 1/n$, where
payoffs no longer play any role. We have used $\beta=0.5$ for the low selective
pressure, $\beta=5$ for the high selective pressure, reported in the SI. For the
results in the main text we have used $\beta=1$.

For an evolutionary neutral selection in where all $P_i(g) = P$ are effectively
the same, $\beta$ has no effect, and $\phi_i = 1/n$. Notice that if we rescale
$P_i(g)$ by $1/c$ so that the payoff effectively becomes
\begin{equation*}
  \frac{1}{c} P_i(g) = \frac{1}{n}\int_0^{t^*} \frac{b}{c} C(t)^T e - C(t) e dt,
\end{equation*}
and we rescale $\beta$ by $c$, then the reproduction probabilities remain
unchanged. Hence, only the ratio $b/c$ effectively plays a role up to a
rescaling of the intensity of selection. Since the point at which the evolution
is neutral (i.e. $\rho$ equals the initial proportional frequency), is
independent of $\beta$, this point will only depend on the ratio $b/c$. So, we
normalized the cost $c=1$. To verify this, we also ran additional simulations
with different costs, which indeed gave the same results. 

We stop the simulation whenever one of the types becomes fixated in the
population. With \emph{fixation} we mean that all other types have gone extinct,
and only a single type remains. If no type has become fixated after
\numprint{1,000} generations, we terminate the simulation and count as winner
the most frequent type. This almost never happens, and the simulation usually
stops after a relatively small number of generations. 

In total, we repeat this process \numprint{1,000} times for the results in the
main text, and for the low ($\beta=0.5$) and high ($\beta=5$) selective pressure
$100$ times. This means that we run the evolutionary dynamics until one of the
types has become fixated, and we record which type has ``won''.  After that, we
again start from the first generation, and run until fixation, and repeat this.
Finally, we calculate how many rounds a type has ``won'' compared to the total
number of rounds, which yields the fixation probability $\rho$.

\onecolumngrid
\clearpage
\twocolumngrid
\renewcommand{\theequation}{S\arabic{equation}}    
% redefine the command that creates the equation no.    
\setcounter{equation}{0}  % reset counter
\setcounter{section}{0}

\onecolumngrid
  \begin{center}
    \Huge\bfseries\sffamily
    Supplementary Information
  \end{center}
  \vskip10ex
\twocolumngrid

\section{Preliminaries}
We investigate matrix differential equations of the form ${\dot X}=F(X,X^T)$,
where $X$ is a real $n\times n$ matrix, and $F$ is a one of two specific, smooth
functions. These functions are such that it turns out to be advantageous to
consider the dynamics of the symmetric and skew-symmetric parts of $X$. Recall
that $\reals^{n\times n}={\cal S} \oplus {\cal A}$, where ${\cal S}$ is the
linear subspace of real symmetric matrices, and ${\cal A}$ the linear subspace
of skew-symmetric matrices. Thus, given any $X\in \reals^{n\times n}$, we can
find unique symmetric $S\in {\cal S}$ and skew-symmetric $A\in {\cal A}$ such
that $X=S+A$. More explicitly, $S=(X+X^T)/2$ and $A=(X-X^T)/2$. Moreover, using the
inner product $\langle X,Y \rangle=\trace(XY^T)$, there holds that
\begin{equation}\label{perp}
{\cal A}^\perp ={\cal S}.
\end{equation}
The norm induced by this inner product is the Frobenius norm
$|X|_F=(\trace(XX^T))^{\frac{1}{2}}$. Recall that the Frobenius norm is
unitarily invariant, i.e. if $U$ is orthogonal (i.e. $UU^T=I_n$), then
\begin{equation}\label{norm-invar}
|UXU^T|_F=|X|_F.
\end{equation}
We denote by $I_n$ the $n\times n$ identity matrix, and by $J_n$ a specific skew
symmetric matrix:
\begin{equation}\label{skew_J}
  J_{n} = \begin{pmatrix} 
      0 & I_{n/2} \\
      -I_{n/2} & 0 \\
    \end{pmatrix}, n \text{~even}.
\end{equation}
For all other linear algebra related terminology and properties we refer to
\cite{Horn1985}.

We briefly review two key ingredients of Heider's (static) theory on social
balance, namely those of a {\it balanced triangle} and a {\it balanced
network}:
\begin{definition}\label{balance}
  A triangle of (not necessarily distinct) agents $i,j$ and $k$ is called
  balanced if
  \begin{equation}\label{rel3}
    X_{ij}X_{ik}X_{kj}>0.
  \end{equation}
  A network is said to be balanced if all triangles of agents in the network are
  balanced.  
\end{definition} 
It turns out that a balanced network takes on a specific
structure, in that at most 2 factions emerge, where members within each faction
have positive opinions about each other, but members in different factions have
negative opinions about each other. This result is known as the Structure
Theorem~\cite{Harary1953,Cartwrighta}:
\begin{theorem}[Structure Theorem in \cite{Harary1953,Cartwrighta}]
  \label{factions}  Let $X$ represent a balanced network. Then up to a
  permutation of agents, the matrix $X$ has the following sign structure:
  $$
  \begin{pmatrix}
  +
  \end{pmatrix}\textrm{ or }
  \begin{pmatrix}
  +&-\\
  -&+
  \end{pmatrix}.
  $$
  Conversely, if, up to permutation, $X$ has one of these structures, then it represents a balanced network.
\end{theorem}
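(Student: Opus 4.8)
The plan is to prove both implications by a direct combinatorial argument, making essential use of the fact that Definition~\ref{balance} allows \emph{repeated} indices in a ``triangle'', which forces the sign pattern of a balanced $X$ to be symmetric off the diagonal. First I would extract the consequences of balance for the degenerate triangles. Taking $i=j=k$ in~\eqref{rel3} gives $X_{ii}^3>0$, hence $X_{ii}>0$ for every $i$; taking the triangle $(i,i,j)$ gives $X_{ii}X_{ij}X_{ji}>0$, so, since $X_{ii}>0$, we get $X_{ij}X_{ji}>0$. In particular every off-diagonal entry is nonzero and $\operatorname{sign}(X_{ij})=\operatorname{sign}(X_{ji})$. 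Thus a balanced network has a positive diagonal and a well-defined, symmetric off-diagonal sign pattern, and it remains to determine that pattern.

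For the forward implication: if every off-diagonal entry is positive we are in the first case and are done. Otherwise choose $p\neq q$ with $X_{pq}<0$, and partition the agents as $F_1=\{p\}\cup\{i : X_{pi}>0\}$ and $F_2=\{i : X_{pi}<0\}$; by the previous paragraph this is indeed a partition, and $q\in F_2$ so $F_2\neq\emptyset$. I claim opinions are positive within each block and negative between blocks. For distinct $i,j$ different from $p$, the triangle $(p,i,j)$ yields $X_{pi}X_{pj}X_{ji}>0$: if $i,j$ lie in the same block then $X_{pi}X_{pj}>0$ forces $X_{ji}>0$ (and, exchanging $i$ and $j$, also $X_{ij}>0$), while if they lie in different blocks then $X_{pi}X_{pj}<0$ forces $X_{ji}<0$. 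Opinions involving $p$ are immediate: $X_{pi}$ has, by construction, the sign characterizing the block of $i$, and $X_{ip}$ has the same sign by the symmetry established above. Permuting the agents so that the members of $F_1$ come first, the diagonal blocks carry only $+$ entries (positive diagonal together with positive intra-block opinions) and the off-diagonal blocks carry only $-$ entries, which is exactly the second sign structure.

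For the converse I would simply verify~\eqref{rel3} for every (possibly degenerate) triangle of a matrix $X$ whose sign pattern, after permutation, is one of the two displayed forms. Degenerate triangles reduce to checking $X_{ii}>0$, or $X_{ij}^2X_{jj}>0$, or $X_{ii}X_{ij}X_{ji}>0$, all of which hold because the diagonal is positive and $X_{ij},X_{ji}$ sit in the same block and hence share a sign. For a triangle on three distinct agents, $X_{ij},X_{ik},X_{kj}$ are the signs of the three edges of $\{i,j,k\}$; in the one-block form these are all $+$, and in the two-block form the number of inter-block edges among the three is either $0$ or $2$, so in every case the product of the three signs equals $+$.

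I expect the only point requiring genuine care is the bookkeeping around the asymmetric index pattern $X_{ij}X_{ik}X_{kj}$ in Definition~\ref{balance}: one must first use the degenerate triangles to upgrade it to a bona fide symmetric edge-sign condition, after which the standard ``friends of a fixed vertex'' partition argument goes through routinely; the rest is elementary sign chasing, with the small edge cases $n=1,2$ already covered since the argument never needs a triangle on three distinct vertices.
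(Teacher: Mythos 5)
Your proof is correct and complete. Note that the paper itself does not prove Theorem~\ref{factions}: it is quoted as the classical Structure Theorem with a citation to Harary and Cartwright--Harary, so there is no in-paper argument to compare against. Your route is the natural self-contained one for the paper's setting, and it correctly handles the one feature that the classical (undirected signed graph) statement does not have to face: here $X$ is a general real matrix, Definition~\ref{balance} uses the asymmetric index pattern $X_{ij}X_{ik}X_{kj}$, and triangles with repeated agents are explicitly allowed. Your preliminary step --- using the degenerate triangles $(i,i,i)$ and $(i,i,j)$ to get $X_{ii}>0$ and $X_{ij}X_{ji}>0$, hence nonzero entries and a symmetric off-diagonal sign pattern --- is exactly what is needed to reduce to the undirected situation, after which your partition by the signs of the opinions of a fixed agent $p$ (with the pairs involving $p$ handled by construction plus sign symmetry) is the standard Harary-type argument, and the converse is a routine sign count (zero or two inter-block edges in any genuine triangle, plus the degenerate cases you list). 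One harmless remark: your $F_1$ may be the singleton $\{p\}$ and $F_2$ may be empty, but both degenerate situations are consistent with the ``at most two factions'' form of the theorem, so nothing is lost. The paper's later use of the theorem (reading off balance from sign patterns of $u_1u_1^T$ or $ww^T$) only needs the converse direction, which your last paragraph covers.
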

Notice that the same theorem holds irrespective of any permutation of $i$,$j$
and $k$ in definition~\ref{balance}.

\section{Equation ${\dot{X}=X^2}$}
Consider the model studied numerically in \cite{Kulakowski2005} and analysed for
symmetric initial conditions in \cite{Marvel2011}:
\begin{equation}\label{flow-X^2}
{\dot X}=X^2, X(0)=X_0,
\end{equation}
where each $X_{ij}$ is real-valued and denotes the opinion agent $i$ has about
agent $j$. Positive values mean that agent $i$ thinks favourably about $j$,
whereas negative values mean that $i$ thinks unfavourably about $j$. More
explicitly, model~\ref{flow-X^2} can also be written entrywise:
\begin{equation}\label{entry-wise-X^2}
{\dot X}_{ij}=\sum_k X_{ik}X_{kj}.
\end{equation}
The basic question in this context is whether or not the solutions
of~\ref{flow-X^2} evolve towards a state which corresponds to a balanced
network. A minor technical issue is that the solution $X(t)$ of
\ref{flow-X^2} often blows up in finite time ${\bar t}$ as we shall see
later. To resolve this problem we investigate the sign pattern of the matrix
limit $\lim_{t\rightarrow {\bar t}}X(t)/|X(t)|_F$ instead, and say that the
network evolves to a balanced state, if this matrix limit is balanced.

\subsection*{Normal initial condition}
We start by defining
$$
{\cal N}=\{X\in \reals^{n\times n}| XX^T=X^TX\},
$$
the set of real, normal matrices. Notice that if $X$ belongs to $\cal{N}$ then
so does $X^2$, hence the set $\cal{N}$ is invariant for $\dot{X}=X^2$.

Recall that normal matrices are (block)-diagonalizable with blocks of size at
most $2$ by an orthogonal transformation: if $X_0\in {\cal N}$, then
\begin{equation}
  U^TX_0U=\Lambda_0,
  \label{diagonalizable}
\end{equation} 
where $\Lambda_0$ consists of real $1 \times 1$ scalar
blocks $A_i$ and real $2\times 2$ blocks $B_j=\alpha_j I_2 + \beta_j J_2$ with
$\beta_j\neq 0$.

Note that if $\Lambda(t)$ is the solution to the initial value problem ${\dot
\Lambda}=\Lambda^2$, $\Lambda(0)=\Lambda_0$, then $X(t):=U\Lambda(t)U^T$ is the
solution to Eq.~\ref{flow-X^2}. This shows it is sufficient to solve system
\ref{flow-X^2} in case of scalar $X$ or in case of a specific, $2\times 2$,
normal matrix $X$. The scalar case is easy to solve: the solution of ${\dot
x}=x^2$, $x(0)=x_0$, is
\begin{equation}
  x(t)= \frac{x_0}{1-x_0t},
    \label{equ:scalar}
\end{equation}
which is easily verified, so we turn to the $2\times 2$
case by considering: 
\begin{equation}\label{flow2-X^2}
{\dot X}=X^2,\; X(0)=\alpha I_2 + \beta J_2,
\textrm{ where } \beta > 0.
\end{equation}

\begin{lemma}\label{2by2-X^2}
The forward solution $X(t)$ of~\ref{flow2-X^2} is defined for all $t\in [0,+\infty)$, and 
$$
\lim_{t\rightarrow +\infty} X(t)=0
\textrm{ and }\lim_{t\rightarrow +\infty} \frac{X(t)}{|X(t)|_F}=-\frac{\sqrt{2}}{2}I_2.
$$
\end{lemma}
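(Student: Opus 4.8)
The plan is to exploit the fact that the two-dimensional subspace $\mathcal{C} := \{aI_2 + bJ_2 : a,b\in\reals\}$ is a commutative subalgebra of $\reals^{2\times 2}$ isomorphic to $\complex$. Since $J_2^2 = -I_2$, the map $\phi\colon \complex \to \mathcal{C}$, $\phi(a+ib) = aI_2 + bJ_2$, is an injective unital algebra homomorphism: $\phi(zw) = \phi(z)\phi(w)$ and $\phi(\bar z) = \phi(z)^T$. The initial condition $X(0) = \alpha I_2 + \beta J_2 = \phi(\alpha + i\beta)$ lies in $\mathcal{C}$, and $\mathcal{C}$ is invariant for ${\dot X}=X^2$ (it is a linear subspace closed under squaring). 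Hence, by uniqueness of solutions, $X(t) = \phi(z(t))$ where $z(t)$ solves the complex scalar Riccati equation $\dot z = z^2$, $z(0) = z_0 := \alpha + i\beta$. The computation leading to Eq.~\ref{equ:scalar} is purely algebraic and carries over verbatim to $\complex$, so $z(t) = z_0/(1 - z_0 t)$.

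Next I would settle the existence interval and the first limit. Because $\beta > 0$, the number $z_0$ is not real, so $1 - z_0 t \neq 0$ for every $t\in[0,+\infty)$; thus $z(t)$, and therefore $X(t) = \phi(z(t))$, is defined on all of $[0,+\infty)$. Moreover $|1 - z_0 t| = \sqrt{(1-\alpha t)^2 + \beta^2 t^2} \to +\infty$ as $t\to+\infty$, whence $z(t)\to 0$ and $X(t)\to 0$.

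For the normalized limit, the key identity is $\phi(z)\phi(z)^T = \phi(z)\phi(\bar z) = \phi(|z|^2) = |z|^2 I_2$, which gives $|X(t)|_F^2 = \trace(|z(t)|^2 I_2) = 2|z(t)|^2$, i.e.\ $|X(t)|_F = \sqrt{2}\,|z(t)|$. It then remains to track the phase of $z(t)$: writing $z(t)/|z(t)| = (z_0/|z_0|)\cdot \overline{(1-z_0 t)/|1-z_0 t|}$ and noting that $(1 - z_0 t)/|1 - z_0 t| \to -z_0/|z_0|$ as $t\to+\infty$ (the constant term becomes negligible), we obtain $z(t)/|z(t)| \to -(z_0/|z_0|)(\bar z_0/|z_0|) = -1$. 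Consequently
\begin{equation*}
\frac{X(t)}{|X(t)|_F} = \frac{1}{\sqrt{2}}\,\phi\!\left(\frac{z(t)}{|z(t)|}\right) \longrightarrow \frac{1}{\sqrt{2}}\,\phi(-1) = -\frac{\sqrt{2}}{2}\,I_2 .
\end{equation*}

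I do not expect a genuine obstacle here: the argument is short once the complex reformulation is in place. The only points requiring a little care are (i) justifying that $\mathcal{C}$ is invariant, so that the matrix problem really does collapse to the scalar complex ODE, and (ii) the phase computation $z(t)/|z(t)|\to -1$, which is where the hypothesis $\beta\neq 0$ (equivalently $z_0\notin\reals$) is used both to rule out finite-time blow-up and to pin the limiting direction to the negative real axis. An alternative, essentially equivalent route is to diagonalize $X(0)$ over $\complex$ — its eigenvalues are $\alpha\pm i\beta$ — and use that the block-diagonal structure is preserved by the flow, but the subalgebra viewpoint keeps the bookkeeping lightest.
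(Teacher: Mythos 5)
Your proof is correct. It rests on the same reduction as the paper's, but executes it differently: the paper writes $X=S+A$ with $S=sI_2$, $A=aJ_2$, derives the planar system $\dot s=s^2-a^2$, $\dot a=2as$, and then argues qualitatively from the phase portrait (the orbits are circles through the origin centred on the $a$-axis, so solutions tend to $0$, approach it tangentially to the $s$-axis from the half-plane $s<0$, giving the normalized limit $-\tfrac{\sqrt2}{2}I_2$). Your identification of $\{aI_2+bJ_2\}$ with $\complex$ turns that very same planar system into the scalar equation $\dot z=z^2$, $z=s+ia$, which you then solve in closed form, $z(t)=z_0/(1-z_0t)$, so the existence on $[0,+\infty)$ (since $z_0\notin\reals$ forces $1-z_0t\neq0$), the decay $z(t)\to0$, and the phase limit $z(t)/|z(t)|\to-1$ all drop out by direct computation; together with $|X(t)|_F=\sqrt2\,|z(t)|$ this gives the stated limits. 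What your route buys is an explicit formula that replaces the paper's implicit-orbit and tangency argument (in particular it pins down cleanly, without any quadrant discussion, why the limiting direction is the negative real axis); what the paper's route buys is a picture (Fig.~\ref{phase}) of the circular transients that the main text refers to when explaining the ``bumps'' in the dynamics. Your verification points (invariance of the subalgebra via uniqueness, and the use of $\beta\neq0$ both for global existence and for the phase limit) are exactly the right ones, and the homomorphism identities $\phi(zw)=\phi(z)\phi(w)$, $\phi(\bar z)=\phi(z)^T$ you invoke are valid since $J_2^2=-I_2$ and $J_2^T=-J_2$.
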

\begin{proof}
Let $X_0=S_0+A_0$, $S_0=\alpha I_2$ and $A_0=\beta J_2$ where $J_2$ is as
defined in Eq.~\ref{skew_J}. Then the solution $X(t)$ of~\ref{flow2-X^2}
can be decomposed as $S(t)+A(t)$, where
\begin{align}
{\dot S}&=S^2+A^2,\;\; S(0)=S_0,\label{s-part-X^2}\\
{\dot A}&=AS+SA,\;\; A(0)=A_0\label{a-part-X^2}.
\end{align}
Note that~\ref{s-part-X^2} is a matrix Riccati differential equation with
the property that the set ${\cal L}:=\{s I_2+a J_2| s,\; a \in \reals\}$, is an
invariant set under the flow. Therefore it suffices to solve the scalar Riccati
differential equation corresponding to the dynamics of the scalar coefficients
$s$ and $a$:
\begin{align}
{\dot s}&=s^2-a^2,\; s(0)=\alpha, \label{s-sys}\\
{\dot a}&=2as,\; a(0)=\beta, \label{a-sys}
\end{align}
whose solution is given implicitly by:
$$ s^2+\left(a-\frac{1}{2c}\right)^2=\left(\frac{1}{2c}\right)^2\textrm{ if }
c\neq 0, $$
where $c$ is an integration constant. So, the orbits form circles which are
centred at $(0,1/2c)$ and pass through $(0,0)$, and by $a=0$ if $c=0$. The phase
portrait of system~\ref{s-sys}-\ref{a-sys} is illustrated in Fig.~\ref{phase}. 

\begin{figure}[t]
\begin{center}
  \includegraphics{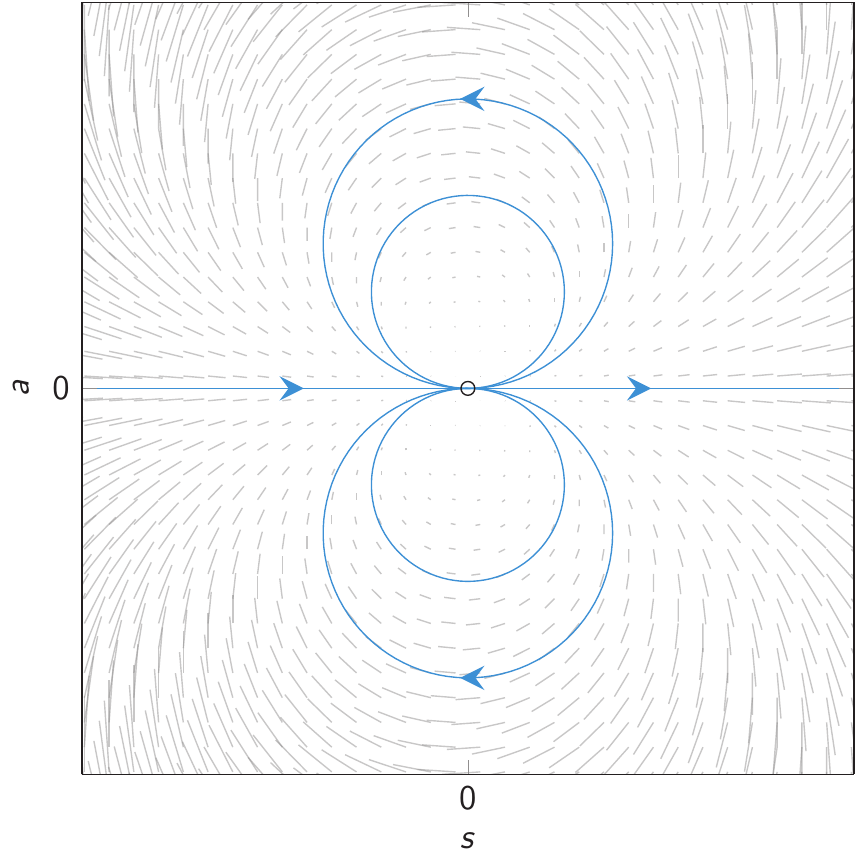}
  \caption{Phase portrait of system~\ref{s-sys}-\ref{a-sys}. Circular orbits
  in the upper half plane ($a>0$) are traversed counter clockwise, whereas
  circular orbits in the lower half plane ($a<0$) are traversed clockwise.\label{phase}} 
\end{center}
\end{figure}

All solutions $(s(t),a(t))$ of system~\ref{s-sys}-\ref{a-sys}, not
starting on the $s$-axis, converge to zero as $t\rightarrow +\infty$, and
approach the origin in the second quadrant for solutions in the
upper-half-plane, and in the third quadrant for solutions in the
lower-half-plane. Moreover, since the $s$-axis is the tangent line to every
circular orbit at the origin, the slopes $a(t)/s(t)$ converge to $0$ along every
solution $\lim_{t\rightarrow +\infty}a(t)/s(t)=0$.
Consequently, the forward solution $X(t)$ of~\ref{flow2-X^2} satisfies:
$$\lim_{t\rightarrow +\infty}X(t)
  =\lim_{t\rightarrow +\infty}s(t)I_2 + a(t)J_2
  =0,$$
and 
$$\lim_{t\rightarrow +\infty}\frac{X(t)}{|X(t)|_F}=-\frac{\sqrt{2}}{2}I_2.$$
\end{proof}

Combining the solution for the scalar and $2 \times 2$ case yields our main
result in the normal case:
\begin{theorem}\label{main-X^2}
Let $X_0\in {\cal N}$, and let $(U,\Lambda_0)$ be as in Eq.
$\ref{diagonalizable}$. Define
$$
{\bar t}_i=\begin{cases}
1/a_i\textrm{ if } a_i>0\\
+\infty \textrm{ if } a_i\leq 0
\end{cases}
\textrm{ for all } i=1,\dots,k, 
$$
and let $ {\bar t}=\min_i {\bar t}_i$. Then the forward solution $X(t)$ of~\ref{flow-X^2} is defined for $[0,{\bar
t})$. 

If there is a unique $i^*\in \{1,\dots,k\}$ such that ${\bar t}={\bar t}_{i^*}$
is finite, then
$$
\lim_{t\rightarrow {\bar t}_{i^*}-}\frac{X(t)}{|X(t)|_F}=U_{i^*}U_{i^*}^T,
$$
where $U_{i^*}$ is the $i^*$th column of $U$, an eigenvector corresponding to eigenvalue $a_{i^*}$ of $X_0$. 
\end{theorem}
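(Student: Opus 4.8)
The plan is to push everything through the orthogonal change of coordinates that block-diagonalizes $X_0$, so that the assertion follows by reading off the blockwise behavior already established. Since $U$ is orthogonal, the unitary invariance of the Frobenius norm, Eq.~\ref{norm-invar}, gives $|X(t)|_F=|\Lambda(t)|_F$ and hence
\[
\frac{X(t)}{|X(t)|_F}=U\,\frac{\Lambda(t)}{|\Lambda(t)|_F}\,U^T ,
\]
so it suffices to understand the normalized block-diagonal solution $\Lambda(t)/|\Lambda(t)|_F$, where $\Lambda(t)$ solves ${\dot\Lambda}=\Lambda^2$, $\Lambda(0)=\Lambda_0$. Each scalar block $A_i=(a_i)$ evolves by Eq.~\ref{equ:scalar} as $a_i(t)=a_i/(1-a_i t)$, defined on $[0,{\bar t}_i)$ and blowing up as $t\to{\bar t}_i-$ exactly when $a_i>0$; each $2\times2$ block $B_j=\alpha_jI_2+\beta_jJ_2$ evolves as in Lemma~\ref{2by2-X^2} and stays defined (and bounded) on all of $[0,+\infty)$, so we may set its blow-up time to $+\infty$. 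Consequently $\Lambda(t)$, and therefore $X(t)=U\Lambda(t)U^T$, is defined exactly on $[0,{\bar t})$ with ${\bar t}=\min_i{\bar t}_i$, and this minimum can only be attained by a positive scalar block.

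Now suppose $i^*$ is the \emph{unique} index with ${\bar t}={\bar t}_{i^*}<\infty$; then necessarily $A_{i^*}=(a_{i^*})$ is a scalar block with $a_{i^*}>0$. The idea is to factor out the dominant entry: divide $\Lambda(t)$ by the scalar $a_{i^*}(t)=a_{i^*}/(1-a_{i^*}t)$, which tends to $+\infty$ as $t\to{\bar t}_{i^*}-$. For every other scalar block one has ${\bar t}_j>{\bar t}_{i^*}$ --- strictly, by uniqueness, when $a_j>0$, and automatically when $a_j\le0$ --- so $a_j(t)$ is continuous, hence bounded, on the compact interval $[0,{\bar t}_{i^*}]$; likewise each $B_j(t)$ is continuous and bounded there by Lemma~\ref{2by2-X^2}. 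Therefore every entry of $\Lambda(t)/a_{i^*}(t)$ tends to $0$ except the $(i^*,i^*)$ entry, which tends to $1$, i.e.\ $\Lambda(t)/a_{i^*}(t)\to E$, where $E$ is the matrix with a single $1$ in position $(i^*,i^*)$ and zeros elsewhere. Since $E\neq0$ and $|E|_F=1$, dividing through by the Frobenius norm and using its continuity gives
\[
\frac{\Lambda(t)}{|\Lambda(t)|_F}=\frac{\Lambda(t)/a_{i^*}(t)}{\bigl|\Lambda(t)/a_{i^*}(t)\bigr|_F}\longrightarrow\frac{E}{|E|_F}=E .
\]

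Substituting back into the unitary conjugation, $X(t)/|X(t)|_F=U\bigl(\Lambda(t)/|\Lambda(t)|_F\bigr)U^T\to UEU^T=U_{i^*}U_{i^*}^T$, the outer product of the $i^*$th column of $U$ with itself. Finally, reading off the $i^*$th column of $U^TX_0U=\Lambda_0$ --- equivalently $X_0U=U\Lambda_0$ --- and using that the $i^*$th block is the scalar $a_{i^*}$ gives $X_0U_{i^*}=a_{i^*}U_{i^*}$, so $U_{i^*}$ is an eigenvector of $X_0$ for the eigenvalue $a_{i^*}$, as claimed. I expect no serious obstacle here: all the substance is already contained in the scalar formula Eq.~\ref{equ:scalar} and in Lemma~\ref{2by2-X^2}, and the one point deserving care is invoking the uniqueness of $i^*$ to ensure that exactly one diagonal entry dominates while the $2\times2$ blocks (which never blow up) become negligible after normalization.
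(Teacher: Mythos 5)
Your proof is correct and follows essentially the same route as the paper's: conjugate by $U$, solve the scalar and $2\times2$ blocks via Eq.~\ref{equ:scalar} and Lemma~\ref{2by2-X^2}, and use the unitary invariance~\ref{norm-invar} of the Frobenius norm to reduce to the block-diagonal limit. The only difference is that you spell out the final dominance argument (dividing by $a_{i^*}(t)$ and using boundedness of the other blocks on $[0,\bar t_{i^*}]$), which the paper states without detail.
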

\begin{proof}
Consider the initial value problem:
$$
{\dot \Lambda}=\Lambda^2,\; \Lambda(0)=\Lambda_0.
$$
Its solution is given by
$$
\Lambda(t)=\begin{pmatrix}
\frac{a_1}{1-a_1t}&\dots&0 &0&\dots& 0\\
\vdots&  \ddots &\vdots& \vdots&\ddots&\vdots\\
0&\dots& \frac{a_k}{1-a_kt}&0&\dots&0\\
0&\dots&0& X_1(t)&\dots&0\\
\vdots&\ddots&\vdots&\vdots&\ddots&\vdots\\
0&\dots&0&0&\dots&X_l(t)
\end{pmatrix},
$$
where for all $j=1,\dots,l$, $X_j(t)$ is the forward solution
of~\ref{flow2-X^2}, which is defined for all $t$ in $[0,+\infty)$, and
converges to $0$ as $t\rightarrow +\infty$ by Lemma $\ref{2by2-X^2}$.

This clearly shows that $\Lambda (t)$ is defined in forward time for $t$ in
$[0,{\bar t})$. Since the solution of~\ref{flow-X^2} is given by
$X(t)=U\Lambda(t)U^T$, $X(t)$ is also defined in forward time for $t$ in
$[0,{\bar t})$. It follows from
~\ref{norm-invar} that
$$
\frac{X(t)}{|X(t)|_F}=U\frac{\Lambda(t)}{|\Lambda (t)|_F}U^T.
$$
If $i^*\in\{1,\dots,k\}$ is the unique value such that ${\bar t}={\bar t}_{i^*}$, then using~\ref{norm-invar}:
$$
\lim_{t\rightarrow {\bar t}_i^*}\frac{X(t)}{|X(t)|_F}=U\lim_{t\rightarrow {\bar t}_i^*}\frac{\Lambda(t)}{|\Lambda (t)|_F}U^T=
Ue_{i^*}e_{i^*}^TU^T=U_{i^*}U_{i^*}^T,
$$
where $e_{i^*}$ denotes the $i^*$th standard unit basis vector of $\reals^n$.
\end{proof}

Theorem $\ref{main-X^2}$ provides a sufficient condition guaranteeing that
social balance in the sense of definition $\ref{balance}$ is achieved. If
$X_0$ has a simple, positive, real eigenvalue $a_{i^*}$, and if no entry of the
eigenvector $U_{i^*}$ is zero, then the network becomes  balanced. Indeed, there
holds that, up to a permutation of its entries, the sign pattern of the
eigenvector $U_{i^*}$ is either: 
$$ U_{i^*}=\begin{pmatrix}+\end{pmatrix}\textrm{ or
  }\begin{pmatrix}-\end{pmatrix} \;\; \Longrightarrow \;\;
    U_{i^*}U_{i^*}^T=\begin{pmatrix}+\end{pmatrix}, $$
or
$$
U_{i^*}=\begin{pmatrix}
+\\
\hline
-
\end{pmatrix}\;\; \Longrightarrow \;\;U_{i^*}U_{i^*}^T=\begin{pmatrix}
+&-\\
-&+
\end{pmatrix}.
$$
In either case, Theorem $\ref{factions}$ implies that the normalized state of the system becomes balanced in finite time.

\subsection*{Generic initial condition}
Although Theorem $\ref{main-X^2}$ provides a sufficient condition for the
emergence of social balance, it requires that the initial condition $X_0$ is
normal. But the set ${\cal N}$ of normal matrices has measure zero in the set of
all real $n\times n$ matrices, and thus the question arises if social balance
will arise for non-normal initial conditions as well. We investigate this issue
here, and will see that generically, social balance is not achieved.

If $X_0$ is a general real $n\times n$ matrix, we can put it in real Jordan
canonical form by means of a similarity transformation: 
\begin{equation}\label{jor}
X(0)=T\Lambda_0T^{-1},\;\; TT^{-1}=I_n, 
\end{equation}
with $\Lambda_0 = \diag(A_1,\ldots, A_k, B_1,\ldots,B_l)$, where $A_i$ are real
Jordan blocks and
\begin{equation}\label{block-jor}
B_j=\begin{pmatrix}
C_i&I_2&\dots&0\\
0&C_i&\ddots&0\\
\vdots&\vdots& \ddots&\vdots\\
0&0&\dots&C_i
\end{pmatrix},
C_j=\alpha_j I_2 + \beta_j J_2,
\end{equation}
with $\beta_j\neq 0$.

We again observe that if $\Lambda(t)$ is the solution to the initial value
problem ${\dot \Lambda}=\Lambda^2$, $\Lambda(0)=\Lambda_0$, then
$X(t):=T\Lambda(t)T^{-1},$ is the solution to Eq.~\ref{flow-X^2}. Again, it is
sufficient to solve system~\ref{flow-X^2} in case of specific
block-triangular $X$ of the form $A_i$ or $B_j$ as in~\ref{block-jor}.  To
deal with the first form $A_i$, we first we consider more general, triangular
Toeplitz initial conditions: 
\begin{equation}\label{toep}
X(0)=\begin{pmatrix}\
x_1(0) & x_2(0) & \cdots & x_n(0)\\
    0  & x_1(0) & \ddots & x_{n-1}(0)\\
\vdots & \vdots & \ddots & \vdots \\
    0  &     0  & \cdots & x_1(0)
\end{pmatrix},
\end{equation}
with $x_i(0)$ reals, and denote ${\cal T}{\cal T}=\{X\;|\;X\textrm{ is of the
form }\ref{toep}\}$.  It turns out that this is an invariant set for the
system, which can be easily verified by noting that if $X$ belongs to ${\cal
T}{\cal T}$, then so does $X^2$.
\begin{lemma}\label{sol-real-Jordan}
Let $X(0)\in {\cal T}{\cal T}$ with 
$$
x_i(0)=\begin{cases}
a\neq0\textrm{ if } i=1\\
1\textrm{ if }i=2\\
0\textrm{ otherwise} 
\end{cases}.
$$
Then the forward solution $X(t)$ of~\ref{flow-X^2} is defined on $[0,t^*)$
where $t^*=1/a$ if $a > 0$ and on $t^* = \infty$ if $a \leq 0$, belongs to
${\cal T}{\cal T}$, and satisfies $$
x_i(t)=p_i\left( \frac{1}{1-at}\right),\;\; t\in [0,t^*),
$$
where each $p_i(z)$ is a polynomial of degree $i$:
\begin{equation}\label{poly2}
p_i(z)=\begin{cases}
az\textrm{ if }i=1\\
\frac{1}{a^{i-2}}z^i+\dots + c_iz^2\textrm{ otherwise }
\end{cases},
\end{equation}
where $c_i$ is some real constant, 
so that $p_i(z)$ has no constant or first order terms when $i>1$.
\end{lemma}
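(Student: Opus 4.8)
The plan is to exploit the algebra isomorphism between $n\times n$ upper–triangular Toeplitz matrices and the truncated polynomial ring $\reals[z]/(z^n)$: identifying a matrix in ${\cal T}{\cal T}$ with entries $x_1,\dots,x_n$ with the polynomial $x_1 + x_2 z + \cdots + x_n z^{n-1}$, the map $X\mapsto X^2$ becomes ``square and truncate''. Since ${\cal T}{\cal T}$ is invariant (as already noted), the matrix equation ${\dot X}=X^2$ decouples into the triangular scalar system
\begin{equation*}
{\dot x}_k=\sum_{i=1}^{k}x_i x_{k+1-i},\qquad k=1,\dots,n,
\end{equation*}
which I will solve recursively in $k$. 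The $k=1$ equation is ${\dot x}_1=x_1^2$, $x_1(0)=a$, with solution $x_1(t)=a/(1-at)=p_1(z)$ where I set $z:=1/(1-at)$ and $p_1(z)=az$; this also pins down the maximal forward interval: $z(t)$ is defined exactly on $[0,t^*)$ with $t^*=1/a$ if $a>0$ and $t^*=\infty$ if $a\le 0$, and since every $x_k$ will turn out to be a polynomial in $z$, the solution $X(t)$ is defined on precisely this interval.

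The key device is to change the independent variable from $t$ to $z$, noting ${\dot z}=az^2$. For $k\ge 2$ I split off the part of ${\dot x}_k$ that is linear in $x_k$: writing ${\dot x}_k = 2x_1 x_k + g_k$ with $g_k:=\sum_{i=2}^{k-1}x_i x_{k+1-i}$ depending only on $x_2,\dots,x_{k-1}$, and substituting $x_1=az$, the equation becomes the linear ODE $dx_k/dz = (2/z)\,x_k + g_k/(a z^2)$. Variation of parameters gives $x_k = z^2\bigl(C + \int g_k(z)/(a z^4)\,dz\bigr)$. I then induct on $k$ with hypothesis: $x_i=p_i(z)$, where $p_i$ has degree $i$, leading coefficient $a^{-(i-2)}$, and no constant or first–order term for $i>1$. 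Under this hypothesis each product $p_i p_{k+1-i}$ appearing in $g_k$ has degree $k+1$ and lowest–degree term of degree $\ge 2+2=4$; hence $g_k(z)/(a z^4)$ is a genuine polynomial, its antiderivative is a polynomial of degree $k-2$ with zero constant term, and multiplying by $z^2$ yields $x_k = C z^2 + (\text{degree }3\text{ through }k)$, a polynomial of degree $k$ with vanishing constant and first–order coefficients. The constant $C$ — which is exactly the coefficient $c_k$ of $z^2$ — is fixed by the initial condition $x_k(0)=0$, evaluated at $z=1$. The base cases $k=1$ ($p_1=az$) and $k=2$ (here $g_2=0$, so $x_2=z^2$, i.e.\ $c_2$ absent and leading coefficient $1=a^0$) are checked directly, and the inductive step closes the argument.

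The one genuinely delicate point is the degree/leading–coefficient bookkeeping. I must confirm that the lowest–degree term of $g_k$ has degree at least $4$ — otherwise the integrand $g_k/(a z^4)$ would contain a $z^{-1}$ term and the antiderivative a spurious logarithm — and then track the leading coefficient: by the induction hypothesis the top–degree ($z^{k+1}$) coefficient of $g_k$ is $\sum_{i=2}^{k-1} a^{-(i-2)}\,a^{-(k+1-i-2)} = (k-2)\,a^{-(k-3)}$; dividing by $a$ gives $(k-2)\,a^{-(k-2)}z^{k-3}$, integrating divides by $k-2$ to give $a^{-(k-2)}z^{k-2}$, and multiplying by $z^2$ produces the term $a^{-(k-2)}z^{k}$, so $p_k$ indeed has degree $k$ with leading coefficient $a^{-(k-2)}$. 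Finally, assembling the $p_k(z(t))$ back into a Toeplitz matrix and invoking invariance of ${\cal T}{\cal T}$ confirms that this is the solution of the original matrix equation on $[0,t^*)$, completing the proof.
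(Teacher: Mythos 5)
Your proof is correct and follows essentially the same route as the paper: solve the triangular Toeplitz system recursively, treating only the first equation as nonlinear, and induct on the polynomial structure (degree $i$, leading coefficient $a^{-(i-2)}$, no constant or linear term) to show each $x_i$ is $p_i(1/(1-at))$. Your substitution $z=1/(1-at)$ merely repackages the paper's variation-of-constants computation in $t$ (where the integrating factor is $x_2(t)=1/(1-at)^2$), and your degree and leading-coefficient bookkeeping matches the paper's.
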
 
\begin{proof}
First note that system~\ref{flow-X^2} can be solved recursively, 
starting with $x_1(t)$, followed by $x_2(t),x_3(t),\dots$. Only the first
equation for $x_1$ is nonlinear, whereas the equations for $x_2,x_3,\dots$ are
linear. To see this, we write these equations:
\begin{equation*}\label{recursion}
  {\dot x_i=\begin{cases}x_1^2,\;\textrm{ if }i=1\\
    (2x_1(t))x_2,\;\textrm{ if }i=2\\
    (2x_1(t))x_i+\sum_{k=2}^{i-1}x_k(t)x_{i-(k-1)}(t),\;\textrm{ if }i>2
  \end{cases}},
\end{equation*}
with $x_1(0) = a$, $x_2(0) = 1$ and $x_i(0) = 0$ for $i>2$.
The forward solution for $x_1$ is $x_1(t)=\frac{a}{1-at}$, for $t\in [0,t^*)$,
which establishes the result if $i=1$. The forward solution for $x_2$ is:
$x_2(t)=\frac{1}{(1-at)^2}$, for $t\in [0,t^*)$, which establishes the result if
$i=2$. If $i>2$, we obtain the proof by induction on $n$. Assume the result
holds for $i=1,\dots,n$, for some $n\geq 2$, and consider the equation for
$x_{n+1}$. 
Using that $x_n(0)=0$ for $n\geq 2$, the solution is given by:
\begin{multline*}
x_{n+1}(t)=\exponent^{\int_0^t2x_1(s)ds}\Biggl[0 + \\
    \int_0^t \left(\sum_{k=2}^{n}x_k(s)x_{n-k+2}(s)\right)
      \exponent^{\int_0^s-2x_1(\tau)d\tau}ds \Biggr].
\end{multline*}
Since $\exponent^{\int_0^t2x_1(s)ds}=x_2(t)$ and thus
$\exponent^{\int_0^s-2x_1(\tau)d\tau}=1/x_2(s)$, it follows that:
\begin{multline*}
x_{n+1}(t)=\frac{1}{(1-at)^2}\Biggl[\int_0^t \Bigl(\sum_{k=2}^{n}p_k(1/(1-as))\\
        p_{n-k+2}(1/(1-as))\Bigr) (1-as)^2 ds\Biggr].
\end{multline*}
Since the polynomials  appearing in the integral take the form of Eq.~$\ref{poly2}$, they are all missing first order and constant terms, and thus there follows that
\begin{multline*}
x_{n+1}(t)=\frac{1}{(1-at)^2}\Biggl[  \int_0^t \Bigl(\sum_{k=2}^{n} 
  \frac{1}{a^{n-2}}\frac{1}{(1-as)^{n+2}}+\\
\dots +c_kc_{n-k+2}\frac{1}{(1-as)^4} \Bigr)(1-as)^2 ds\Biggr]
\end{multline*}
and so that
$$x_{n+1}(t)=\frac{1}{a^{n-1}}\frac{1}{(1-at)^{n+1}}+\dots+\frac{c_{n+1}}{(1-at)^2},
\;\; t\in[0,t^*),$$
where $K_{n+1}$ and $c_{n+1}$ are certain constants (which are related in some
way which is irrelevant for what follows). This shows that $x_{n+1}(t)$ is
indeed of the form $p_{n+1}(1/(1-at))$ with $p_{n+1}(z)$ as in~\ref{poly2}. 
\end{proof}

Next we consider equation~\ref{flow-X^2} in case $X(0)$ is 
a block triangular Toeplitz initial condition:
\begin{equation}\label{block-toep}
X(0)=\begin{pmatrix}\
B_1(0) & B_2(0) & \cdots & B_n(0)\\
     0 & B_1(0) & \ddots & B_{n-1}(0)\\
\vdots & \vdots & \ddots & \vdots \\
     0 &     0  & \cdots & B_1(0)
\end{pmatrix},
\end{equation}
with $B_i(0) = \alpha_i I_2 + \beta_i J_2$ with $\alpha_i,\beta_i \in \reals$,
and denote ${\cal B}{\cal T}{\cal T}=\{X\;|\;X\textrm{ is of the form
}\ref{block-toep}\}$. Again the set ${\cal BTT}$ is invariant for system
~\ref{flow-X^2}. We use this to solve equation~\ref{flow-X^2} in case
$X(0)$ is a real Jordan block corresponding to a pair of eigenvalues $\alpha\pm
j\beta$.
\begin{lemma}\label{sol-complex-Jordan}
Let $X(0)\in {\cal BTT}$ with 
$$
B_i(0)=\begin{cases}
\alpha I_2 + \beta J_2 \textrm{ if } i=1\\
I_2\textrm{ if }i=2\\
0\textrm{ otherwise} 
\end{cases}.
$$
Then the forward solution $X(t)$ of~\ref{flow-X^2} is defined on $[0,+\infty)$, and it belongs to ${\cal B}{\cal T}{\cal T}$.\end{lemma}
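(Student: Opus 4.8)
The plan is to follow the structure of the proof of Lemma~\ref{sol-real-Jordan}, using that ${\cal BTT}$ is invariant under $\dot X = X^2$ (as noted just before the statement): the solution, while it exists, stays block triangular Toeplitz and is completely determined by its block entries $B_1(t), B_2(t), \dots$. First I would record the recursion these blocks satisfy. Writing $X$ as a block polynomial in the nilpotent block shift and comparing the block at ``distance $i-1$'' on both sides of $\dot X = X^2$, one gets
\begin{equation*}
\dot B_1 = B_1^2, \qquad \dot B_i = B_1 B_i + B_i B_1 + \sum_{k=2}^{i-1} B_k B_{i-k+1}\quad(i\geq 2),
\end{equation*}
with $B_1(0)=\alpha I_2+\beta J_2$, $B_2(0)=I_2$, $B_i(0)=0$ for $i>2$, and the sum empty (hence $0$) when $i=2$. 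The structural point is that this system is triangular and, for each $i\geq 2$, the equation for $B_i$ is \emph{linear} in $B_i$ once $B_1,\dots,B_{i-1}$ are known.

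Second, I would treat the leading block. Since $\beta\neq 0$, we are exactly in the situation of Lemma~\ref{2by2-X^2}: $B_1(t)=s(t)I_2+a(t)J_2$ where $(s,a)$ solves the planar system~\ref{s-sys}--\ref{a-sys} with $(s(0),a(0))=(\alpha,\beta)$ off the $s$-axis. The phase-portrait argument in the proof of Lemma~\ref{2by2-X^2} applies to both half-planes $a>0$ and $a<0$, so $(s(t),a(t))$ traces a circle through the origin; hence $B_1(t)$ is defined for all $t\in[0,+\infty)$, is bounded, and tends to $0$. (Alternatively, the coordinate-swap conjugation applied block-wise sends $J_2\mapsto -J_2$ and leaves $\dot X=X^2$ invariant, reducing the case $\beta<0$ to $\beta>0$.)

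Third, I would induct on $i$. Assume $B_1,\dots,B_{i-1}$ are defined and continuously differentiable on all of $[0,+\infty)$. Then the equation for $B_i$ reads $\dot B_i=\mathcal{A}(t)[B_i]+F_i(t)$ with $\mathcal{A}(t)[M]=B_1(t)M+MB_1(t)$ and $F_i(t)=\sum_{k=2}^{i-1}B_k(t)B_{i-k+1}(t)$; both the linear map $\mathcal{A}(t)$ and the forcing $F_i(t)$ are continuous on $[0,+\infty)$ by the induction hypothesis. Vectorizing gives a linear inhomogeneous system $\dot b=M(t)b+f(t)$ with continuous coefficients on $[0,+\infty)$, whose maximal solution is therefore defined on the whole interval. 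So $B_i$ is defined on $[0,+\infty)$, closing the induction. Reassembling the blocks yields a solution $X(t)\in{\cal BTT}$ on $[0,+\infty)$, which by uniqueness is the forward solution of~\ref{flow-X^2}.

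The only slightly delicate steps are verifying that $\dot X=X^2$ really preserves ${\cal BTT}$ and gives the stated block recursion (a direct computation with the shift structure), and ensuring the leading block does not blow up in finite time --- which is precisely where the hypothesis $\beta\neq 0$ is used, via Lemma~\ref{2by2-X^2}, to replace the scalar finite-time blow-up by a bounded, globally defined spiral. Beyond this bookkeeping I do not anticipate a genuine obstacle.
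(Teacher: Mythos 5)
Your proposal is correct and follows essentially the same route as the paper: the block recursion with only the leading block nonlinear, global existence of $B_1(t)$ via Lemma~\ref{2by2-X^2}, and then global existence of the remaining blocks from the linear structure. The only cosmetic difference is that you invoke general global-existence theory for linear systems with continuous coefficients, whereas the paper writes the solutions explicitly (using that matrices of the form $pI_2+qJ_2$ commute, so $X_1X_i+X_iX_1=2X_1X_i$, giving $X_2=\exponent^{\int_0^t 2X_1}$ and a variation-of-constants formula for $i>2$).
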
 
\begin{proof}
Just like in the proof of Proposition $\ref{sol-real-Jordan}$, we 
note that system~\ref{flow-X^2} can be solved recursively, 
starting with $X_1(t)$, followed by $X_2(t),X_3(t),\dots$. Only the first equation for $X_1$ is nonlinear, whereas 
the equations for $X_2,X_3,\dots$ are linear. To see this, we write these equations:
\begin{equation*}\label{recursion2}
{\dot X_i=\begin{cases}X_1^2,\;\textrm{ if }i=1\\
(2X_1(t))X_2,\;\textrm{ if }i=2\\
    (2X_1(t))X_i+\sum_{k=2}^{i-1}X_k(t)X_{i-(k-1)}(t),\; \textrm{ if }i>2
\end{cases}}.
\end{equation*}
with $X_1(0)=\alpha I_2 + \beta J_2$, $X_2(0)=I_2$ and $X_i(0)=0$ for $i > 2$.
Here we have used the fact that $X_1X_i+X_iX_1=2X_1X_i$, since any two matrices
of the form $p I_2 + q J_2$ commute and the matrices $X_i(t)$ are of this form. 

By Lemma $\ref{2by2-X^2}$, the forward solution for $X_1(t)$ is defined for all
$t$ in $[0,+\infty)$ (and in fact, converges to zero as $t\rightarrow +\infty$).

Since the $X_1(t)$ commute for every pair of $t$'s,  the forward solution for
$X_2(t)$ is given by \cite{Rugh1996} $X_2(t)=\exponent^{\int_0^t2X_1(s)ds}$, for
$t\in [0,+\infty),$ where this solution exists for all forward times $t$
because $X_1(t)$ is bounded and continuous. Similarly, the forward solution for
$X_i(t)$ when $i>2$, is given by the variation of constants formula: 
$$
X_i(t)=X_2(t)\left[
\int_0^tX_2^{-1}(s)\left(\sum_{k=2}^{i-1}X_k(s)X_{i-(k-1)}(s)\right)ds\right],$$
for $t\in [0,+\infty)\textrm{ when }i>2,$ where these solutions are recursively
defined for all forward times because the formula only involves integrals of
continuous functions.
\end{proof}

Combining both results, puts us in a position to state and prove our main
result.
\begin{theorem}\label{main2-X^2}
Let $X(0)\in \reals^{n\times n}$ and $(T,\Lambda_0)$ as in~\ref{jor} with~\ref{block-jor}. Let 
$a_1>a_2\geq \dots \geq a_k$ with $a_1>0$ a simple eigenvalue with corresponding right and left-eigenvectors 
$U_1$ and $V_1^T$ respectively:
$$
X(0)U_1=a_1U_1\textrm{ and }V_1^TX(0)=a_1V_1^T.
$$
Then the forward solution $X(t)$ of~\ref{flow-X^2} is defined for $[0,1/a_1)$, and
$$
\lim_{t\rightarrow 1/a_1}\frac{X(t)}{|X(t)|_F}=\frac{U_1V_1^T}{|U_1V_1^T|_F}.
$$
\end{theorem}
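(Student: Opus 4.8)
The plan is to reduce to the block-diagonal case already treated in Lemmas~\ref{sol-real-Jordan} and~\ref{sol-complex-Jordan}, and then to transport the resulting rank-one limit back through the similarity transformation $T$, which --- unlike the orthogonal transformation used for normal data in Theorem~\ref{main-X^2} --- does \emph{not} preserve the Frobenius norm. First, exactly as in the proof of Theorem~\ref{main-X^2}, one checks that $X(t):=T\Lambda(t)T^{-1}$ solves~\ref{flow-X^2} as soon as $\Lambda(t)$ solves ${\dot\Lambda}=\Lambda^2$, $\Lambda(0)=\Lambda_0$. Since $\Lambda_0=\diag(A_1,\dots,A_k,B_1,\dots,B_l)$ is block diagonal and the right-hand side ${\dot\Lambda}=\Lambda^2$ respects this block structure, uniqueness gives $\Lambda(t)=\diag(A_1(t),\dots,A_k(t),B_1(t),\dots,B_l(t))$, where each block solves its own copy of ${\dot Y}=Y^2$: the $A_i(t)$ are described by Lemma~\ref{sol-real-Jordan} (with Eq.~\ref{equ:scalar} covering $1\times1$ blocks), and the $B_j(t)$ by Lemma~\ref{sol-complex-Jordan}.

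Next I would isolate the dominant block. Since $a_1$ is simple, its real Jordan block is the $1\times1$ scalar $[a_1]$, occupying the leading position of $\Lambda_0$, so the $(1,1)$ entry of $\Lambda(t)$ is $\mu(t):=a_1/(1-a_1t)$, which is defined and positive on $[0,1/a_1)$ and tends to $+\infty$ as $t\to 1/a_1^-$. Every remaining block is defined on an interval containing $[0,1/a_1)$: a real block for $a_i\le 0$ on $[0,+\infty)$, a real block for $0<a_i<a_1$ on $[0,1/a_i)$ with $1/a_i>1/a_1$, and every complex block on $[0,+\infty)$ by Lemma~\ref{sol-complex-Jordan}. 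Hence all of these blocks are continuous, and therefore bounded, on the compact interval $[0,1/a_1]$. This already shows $X(t)$ is defined precisely on $[0,1/a_1)$, and moreover that $\Lambda(t)/\mu(t)=e_1e_1^T+E(t)$ with $|E(t)|_F\to 0$ as $t\to 1/a_1^-$, since every entry of $E(t)$ other than the vanishing $(1,1)$ entry is a bounded quantity divided by $\mu(t)\to+\infty$.

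Finally, I would transport the limit back through $T$: write $X(t)=\mu(t)\,T\bigl(e_1e_1^T+E(t)\bigr)T^{-1}=\mu(t)\bigl((Te_1)(e_1^TT^{-1})+TE(t)T^{-1}\bigr)$. From $X(0)T=T\Lambda_0$ and $T^{-1}X(0)=\Lambda_0T^{-1}$ one reads off that $Te_1$, the first column of $T$, is a right eigenvector and $e_1^TT^{-1}$, the first row of $T^{-1}$, a left eigenvector of $X(0)$ for $a_1$; simplicity then forces $Te_1=\alpha U_1$ and $e_1^TT^{-1}=\beta V_1^T$ for scalars $\alpha,\beta$, while $(e_1^TT^{-1})(Te_1)=e_1^Te_1=1$ gives $\alpha\beta\,V_1^TU_1=1$, so $V_1^TU_1\ne 0$ and $(Te_1)(e_1^TT^{-1})=U_1V_1^T/(V_1^TU_1)$. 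Using sub-multiplicativity of the Frobenius norm, $|TE(t)T^{-1}|_F\le|T|_F|T^{-1}|_F|E(t)|_F\to 0$, so dividing by $|X(t)|_F$ cancels the blow-up factor $\mu(t)$ and yields
$$\lim_{t\to 1/a_1}\frac{X(t)}{|X(t)|_F}=\frac{U_1V_1^T}{|U_1V_1^T|_F},$$
the scaling of $U_1,V_1$ being chosen so that $V_1^TU_1>0$ (otherwise the limit is determined only up to this sign).

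The step I expect to require the most care is precisely this transport through the non-orthogonal $T$: one must first recognize that $\Lambda(t)/|\Lambda(t)|_F\to e_1e_1^T$ is rank one, and then observe that conjugating a rank-one matrix by a fixed invertible matrix --- plus a Frobenius-small error --- still converges, after renormalization, to a rank-one matrix. The two genuinely technical underpinnings are the boundedness of all non-dominant blocks on $[0,1/a_1]$, which rests on the global-in-time existence supplied by Lemmas~\ref{2by2-X^2} and~\ref{sol-complex-Jordan}, and the identification of the normalizing constant $V_1^TU_1$, which comes from the standard biorthogonality of the left and right eigenvectors of a simple eigenvalue.
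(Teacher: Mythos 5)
Your proof is correct and follows essentially the same route as the paper: pass to the real Jordan form, solve $\dot\Lambda=\Lambda^2$ block-by-block via Lemmas~\ref{sol-real-Jordan} and~\ref{sol-complex-Jordan} (and Eq.~\ref{equ:scalar} for scalar blocks), observe that only the simple $1\times1$ block for $a_1$ blows up, at $t=1/a_1$, and conjugate the resulting rank-one limit back through $T$ to get $U_1V_1^T/|U_1V_1^T|_F$. Your added care about the non-orthogonality of $T$ (factoring out $\mu(t)$ instead of appealing to norm invariance) and about the normalization $V_1^TU_1>0$ simply makes explicit two points the paper's proof passes over silently.
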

\begin{proof}
Consider the initial value problem ${\dot
\Lambda}=\Lambda^2$ with $\Lambda(0)=\Lambda_0$, whose solution is given by
$$\Lambda(t)=\diag(A_1(t), \ldots, A_k(t), B_1(t), \ldots, B_l(t)),$$ where for
all $i=1,\dots,k$, $A_i(t)$ is the forward solution of~\ref{flow-X^2} with
$A_i(0)$ of the form $A_i$ in ~\ref{block-jor}, which by Lemma
$\ref{sol-real-Jordan}$ is defined for all $t\in [0,1/a_i)$.  Since $a_1>a_2\geq
  \dots \geq a_k$, $A_1(t)$ blows up first when $t\rightarrow 1/a_1$.  The
  matrices $B_j(t)$, $j=1,\dots,l$, are the forward solution of~\ref{flow-X^2}
  with $B_j(0)$ of the form $B_j$ in ~\ref{block-jor}, and by Lemma
  $\ref{sol-complex-Jordan}$, they are defined for all $t$ in $[0,+\infty)$.

This clearly shows that $\Lambda (t)$ is defined in forward time for $t$ in
$[0,1/a_1)$. Since the solution of~\ref{flow-X^2} is given by
$X(t)=T\Lambda(t)T^{-1}$, $X(t)$ is also defined in forward time for $t$ in
$[0,1/a_1)$, and it follows  that 
\begin{align*}
\lim_{t\rightarrow 1/a_1}\frac{X(t)}{|X(t)|_F}
  &=\lim_{t\rightarrow 1/a_1}\frac{T\Lambda(t)T^{-1}}{|X(t)|_F}\\
  &=\frac{Te_1e_1^TT^{-1}}{|Te_1e_1^TT^{-1}|_F}
   =\frac{U_1V_1^T}{|U_1V_1^T|_F},
\end{align*}
where $e_1$ denotes the first standard unit basis vector of $\reals^n$.
\end{proof}

Theorem $\ref{main2-X^2}$ implies that social balance is usually not achieved
when $X(0)$ is an arbitrary real initial condition. Indeed, if $X_0$ has a
simple, positive, real eigenvalue $a_1$, and if we assume that no entry of the
right and left eigenvectors $U_1$ and $V_1^T$ are zero (an assumption which is
generically satisfied), then in general, up to a permutation of its entries, the
sign patterns of $U_1$ and $V_1^T$ are:
$$
U_1=\begin{pmatrix}
+\\
+\\
\hline
-\\
-
\end{pmatrix}\textrm{ and } V_1^T=\begin{pmatrix}
+&-&\vline &+&-
\end{pmatrix}
$$
implies that
$$
U_1V_1^T=\begin{pmatrix}
+& -&\vline& +& - \\
+& -&\vline& +& - \\
\hline
-& +&\vline& -& + \\
-& +&\vline& -& + \\
\end{pmatrix}.
$$
Then Theorem $\ref{factions}$ implies that the normalized state of the system does not become balanced in finite time.

This shows that in general, unless $X_0$ is normal (so that Theorem
$\ref{main-X^2}$ is applicable), we cannot expect that social balance will
emerge for system~\ref{flow-X^2}.

\section{Equation $\dot{X}=XX^T$}
We now consider
\begin{equation}\label{flow-XX^T}
{\dot X}=XX^T, X(0)=X_0,
\end{equation}
where again, each $X_{ij}$ denotes the real-valued opinion agent $i$ has about
agent $j$. As before, for $i=j$, the value of  $X_{ii}$ is interpreted as a
measure of self-esteem of agent $i$.  We can also write the equations entrywise:
\begin{equation}\label{entry-wise}
{\dot X}_{ij}=\sum_k X_{ik}X_{jk}.
\end{equation}

As in the case of model ${\dot X}=X^2$, we split up the analysis in two parts.
First we consider system~\ref{flow-XX^T} with normal initial condition
$X_0$, and we shall see that not all initial conditions lead to the emergence of
a balanced network in this case, in contrast to the behaviour of
~\ref{flow-X^2}.  Secondly, we will see that for non-normal, generic initial
conditions $X_0$, we typically do get the emergence of social balance, also
contrasting the behaviour of~\ref{flow-X^2}.

\subsection*{Normal initial condition}
As for the model $\dot{X}=X^2$ the set ${\cal N}$ is invariant for system
~\ref{flow-XX^T}. By using the same diagonalisation as in Eq.
\ref{diagonalizable}, if $\Lambda(t)$ is the solution to the initial value
problem ${\dot \Lambda}=\Lambda\Lambda^T$, $\Lambda(0)=\Lambda_0$, then
$X(t):=U\Lambda(t)U^T$, is the solution to Eq.~\ref{flow-XX^T}. This shows it is
sufficient to solve system~\ref{flow-XX^T} in case of scalar $X$ or in case
of a specific $2\times 2$ normal matrix $X$. The scalar case is easy to solve
and follows Eq.~\ref{equ:scalar}, so we turn to the $2\times 2$ case by
considering
\begin{equation}\label{flow2}
{\dot X}=XX^T,\; X(0)=\alpha I_2 + \beta J_2,
\textrm{ where } \beta \neq 0.
\end{equation}
We define the angle $\phi$ as
\begin{equation}\label{angle}
\phi=\arctan\left(\frac{\alpha}{\beta} \right),\; \phi\in\left(-\frac{\pi}{2},\frac{\pi}{2}\right).
\end{equation}

\begin{lemma}\label{2by2}
Define ${\bar t}$ as 
\begin{equation}\label{escape}
{\bar t}=\frac{\pi}{2\beta}-\frac{\phi}{\beta}.
\end{equation}
Then the forward solution $X(t)$ of~\ref{flow2} is:
\begin{equation}\label{2by2-sol}
X(t)=\beta\tan(\beta t+\phi)I_2 + \beta J_2,\;\; t\in [0, {\bar t}).
\end{equation}
Moreover,
$$
\lim_{t\rightarrow {\bar t}-} X(t)=+\infty I_2 + \beta J_2 \textrm{ and }\lim_{t\rightarrow {\bar t}-} \frac{X(t)}{|X(t)|_F}=\frac{\sqrt{2}}{2}I_2.
$$
\end{lemma}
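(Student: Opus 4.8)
The plan is to reuse the symmetric/skew-symmetric splitting, now exploiting that $\dot X=XX^T$ is automatically symmetric. Writing $X=S+A$ with $S=S^T$, $A=-A^T$, the equation reads $\dot X=(S+A)(S-A)$; the right-hand side is symmetric, so $\dot A=0$ and $A(t)\equiv A_0=\beta J_2$, while $S$ obeys the matrix Riccati equation $\dot S=(S+A_0)(S-A_0)$ with $S(0)=\alpha I_2$. The $2\times2$ simplification is that $S(0)=\alpha I_2$ lies on the line $\reals I_2$, which is invariant for this Riccati flow: using $J_2^2=-I_2$, one has $(sI_2+\beta J_2)(sI_2-\beta J_2)=(s^2+\beta^2)I_2\in\reals I_2$. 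Hence $S(t)=s(t)I_2$, where $s$ solves the scalar Riccati equation $\dot s=s^2+\beta^2$, $s(0)=\alpha$ (equivalently, this is the substitution $\dot Z=Z^2+A_0A_0^T$ from the main text, since a scalar matrix commutes with $\exponent^{A_0t}$, so $Z=S$). Throughout I would take $\beta>0$; the case $\beta<0$ reduces to it by conjugating with $\diag(1,-1)$, which fixes $I_2$, maps $J_2$ to $-J_2$, and preserves both $\dot X=XX^T$ and the Frobenius norm, so the final limits are unchanged.

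Next I would integrate the scalar equation. Separating variables and integrating, and using $s(0)=\alpha$ together with the definition~\ref{angle} of $\phi$, gives $\arctan(s(t)/\beta)=\beta t+\phi$, i.e. $s(t)=\beta\tan(\beta t+\phi)$. Since $\phi\in(-\pi/2,\pi/2)$ and $\beta>0$, the argument $\beta t+\phi$ increases from $\phi$ and first reaches $\pi/2$ at $t=\bar t=(\pi/2-\phi)/\beta$, which is exactly~\ref{escape}; on $[0,\bar t)$ the tangent is finite, so the forward solution exists there, and reassembling $S(t)+A(t)$ yields $X(t)=\beta\tan(\beta t+\phi)I_2+\beta J_2$, which is~\ref{2by2-sol}. (One verifies directly that this $X(t)$ solves $\dot X=XX^T$, and since it blows up at $\bar t$, the interval $[0,\bar t)$ is the maximal forward interval of existence.)

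Finally I would let $t\to\bar t-$. Then $\beta t+\phi\to(\pi/2)-$, so $s(t)=\beta\tan(\beta t+\phi)\to+\infty$, whence $X(t)\to+\infty\,I_2+\beta J_2$. For the normalized limit, $XX^T=(s^2+\beta^2)I_2$ gives $|X(t)|_F=\sqrt{\trace(XX^T)}=\sqrt2\,(s(t)^2+\beta^2)^{1/2}$, so
$$
\frac{X(t)}{|X(t)|_F}=\frac{s(t)I_2+\beta J_2}{\sqrt2\,(s(t)^2+\beta^2)^{1/2}}\longrightarrow\frac{\sqrt2}{2}I_2,
$$
using $s(t)/(s(t)^2+\beta^2)^{1/2}\to1$ and $\beta/(s(t)^2+\beta^2)^{1/2}\to0$. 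All of this is routine computation; the only point that needs care is the sign bookkeeping for $\beta$ and the direction in which $\beta t+\phi$ runs toward $\pi/2$ — getting this right is what makes the blow-up occur with $s\to+\infty$, so that the normalized limit is $+\tfrac{\sqrt2}{2}I_2$ rather than $-\tfrac{\sqrt2}{2}I_2$, the opposite sign from the $t\to+\infty$ decay behaviour of Lemma~\ref{2by2-X^2} for $\dot X=X^2$.
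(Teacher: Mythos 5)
Your proof is correct and follows essentially the same route as the paper: split $X$ into symmetric and skew-symmetric parts, note $A(t)\equiv\beta J_2$, use invariance of the line $\reals I_2$ for the Riccati equation $\dot S=(S+A_0)(S-A_0)$ to reduce to $\dot s=s^2+\beta^2$ with solution $\beta\tan(\beta t+\phi)$, and compute the limits via $|X|_F=\sqrt{2}\,(s^2+\beta^2)^{1/2}$. Your explicit reduction of the case $\beta<0$ by conjugation with $\diag(1,-1)$ is a small extra care the paper leaves implicit (its formula~\ref{escape} tacitly takes $\beta>0$), and is welcome.
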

\begin{proof}
Let $X_0=S_0+A_0$, $S_0=\alpha I_2$, and $A_0=\beta J_2$. Then the solution
$X(t)$ of~\ref{flow2} can be decomposed as $S(t)+A(t)$, where
\begin{align}
{\dot S}&=(S+A)(S-A),\;\; S(0)=S_0,\\
{\dot A}&=0,\;\; A(0)=A_0,
\end{align}
so $A(t) = A_0$, and reduces to
\begin{equation}
{\dot S}=(S+A_0)(S-A_0),\;\; S(0)=S_0\label{s-part}\\
\end{equation}
Note that~\ref{s-part} is a matrix Riccati differential equation with the
property that the line ${\cal L}=\{\alpha I_2| \alpha \in \reals\}$, is an
invariant set under the flow. Therefore it suffices to solve the scalar Riccati
differential equation corresponding to the dynamics of the diagonal entries of
$S$: ${\dot s}=s^2+\beta^2,\; s(0)=\alpha,$
whose forward solution is:
$s(t)=\beta \tan\left(\beta t+\phi \right),$ for $t\in (0,{\bar t}),$
where ${\bar t}$ is given by~\ref{escape}.
Consequently, the forward solution $X(t)$ of~\ref{flow2} is given by:
$X(t)=S(t)+A_0=\beta\tan(\beta t+\phi)I_2 + \beta J_2,$
for $t\in (0,{\bar t}),$ and thus
$\lim_{t\rightarrow {\bar t}-}X(t)=+\infty I_2 + \beta J_2$
and 
$$\lim_{t\rightarrow {\bar t}-}\frac{X(t)}{|X(t)|_F}=\frac{X(t)}{\sqrt{2}|\beta \sec(\beta t+\phi)|}=\frac{\sqrt{2}}{2}I_2.$$
\end{proof}

Combining the solution for the $1\times1$ scalar case in Eq.~\ref{equ:scalar}
and Lemma $\ref{2by2}$ yields our main result:
\begin{theorem}\label{main-XX^T}
Let $X_0\in {\cal N}$, and let $(U,\Lambda_0)$ be as in Lemma $\ref{diagonalizable}$. Define
$$
{\bar t}_i=\begin{cases}
1/a_i\textrm{ if } a_i>0\\
+\infty \textrm{ if } a_i\leq 0
\end{cases}
\textrm{ for all } i=1,\dots,k, 
$$
and
$$
{\bar t}_j=\frac{\pi}{2\beta_j}-\frac{\phi_j}{\beta_j} \textrm{ for all } j=1,\dots,l, 
$$
where $\phi_j=\arctan\left( \frac{\alpha_j}{\beta_j} \right)$ and let ${\bar
t}=\min_{i,j}\{{\bar t}_i,{\bar t}_j \}$.  Then the forward solution $X(t)$ of
~\ref{flow-XX^T} is defined for $[0,{\bar t})$. 

If there is a unique $i^*\in \{1,\dots,k\}$ such that ${\bar t}={\bar t}_{i^*}$
is finite, then
$$\lim_{t\rightarrow {\bar t}_{i^*}-}\frac{X(t)}{|X(t)|_F}=U_{i^*}U_{i^*}^T,$$
where $U_{i^*}$ is the $i^*$th column of $U$, an eigenvector corresponding to
eigenvalue $a_{i^*}$ of $X_0$. 

If there is a unique $j^*\in \{1,\dots, l\}$ such that ${\bar t}={\bar t}_{j^*}$, then
$$
\lim_{t\rightarrow {\bar t}_{j^*}-}\frac{X(t)}{|X(t)|_F}=\frac{\sqrt{2}}{2}U_{j^*}U_{j^*}^T,
$$
where $U_{j^*}$ is an $n\times 2$ matrix consisting of the two consecutive columns of $U$ which correspond 
to the columns of the $2\times 2$ block $B_{j^*}$ in $\Lambda_0$.
\end{theorem}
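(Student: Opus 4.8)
The plan is to mirror the proof of Theorem~\ref{main-X^2}, replacing the scalar and $2\times 2$ solutions of $\dot X = X^2$ by their counterparts for $\dot X = XX^T$. Since ${\cal N}$ is invariant and $U^TX_0U=\Lambda_0$ as in Eq.~\ref{diagonalizable}, I would first verify that $X(t):=U\Lambda(t)U^T$ solves Eq.~\ref{flow-XX^T} whenever $\Lambda(t)$ solves the initial value problem $\dot\Lambda=\Lambda\Lambda^T$, $\Lambda(0)=\Lambda_0$; this reduces everything to understanding $\Lambda(t)$. Because $\Lambda_0$ is block diagonal with scalar blocks $A_i$ and $2\times2$ blocks $B_j=\alpha_jI_2+\beta_jJ_2$, and because $\Lambda\Lambda^T$ preserves this block-diagonal pattern, the blocks decouple: each scalar block evolves by $\dot a = a^2$ (for scalars $XX^T=X^2$), solved by Eq.~\ref{equ:scalar} and blowing up at $1/a_i$ precisely when $a_i>0$; each $2\times2$ block evolves by Eq.~\ref{flow2}, whose solution and escape time $\bar t_j$ are given by Lemma~\ref{2by2} (after reducing to $\beta_j>0$, which one may do by conjugating the corresponding two-dimensional coordinate block by $\diag(1,-1)$, an orthogonal map sending $J_2$ to $-J_2$ and absorbed into $U$). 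Assembling the blocks shows $\Lambda(t)$, hence $X(t)=U\Lambda(t)U^T$, is defined exactly on $[0,\bar t)$ with $\bar t=\min_{i,j}\{\bar t_i,\bar t_j\}$.

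For the limits I would use unitary invariance of the Frobenius norm (Eq.~\ref{norm-invar}) to write $X(t)/|X(t)|_F=U\,(\Lambda(t)/|\Lambda(t)|_F)\,U^T$, so it suffices to identify the limit of $\Lambda(t)/|\Lambda(t)|_F$. The key observation is that $|\Lambda(t)|_F^2$ is the sum of the squared Frobenius norms of the blocks, and as $t\to\bar t$ only the unique block achieving the minimal escape time blows up while all others remain bounded; hence the normalized $\Lambda(t)$ converges to the normalized limit of that single dominant block, padded by zeros. If the minimizer is a scalar block $i^*$, then $a_{i^*}(t)=a_{i^*}/(1-a_{i^*}t)\to+\infty$, so the $(i^*,i^*)$-entry of $\Lambda(t)/|\Lambda(t)|_F\to 1$, the limit is $e_{i^*}e_{i^*}^T$, and $X(t)/|X(t)|_F\to Ue_{i^*}e_{i^*}^TU^T=U_{i^*}U_{i^*}^T$. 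If instead the minimizer is a $2\times2$ block $j^*$, Lemma~\ref{2by2} gives $X_{j^*}(t)=s(t)I_2+\beta_{j^*}J_2$ with $s(t)\to+\infty$ and $|X_{j^*}(t)|_F=\sqrt{2(s(t)^2+\beta_{j^*}^2)}$, so $X_{j^*}(t)/|X_{j^*}(t)|_F\to\tfrac{\sqrt2}{2}I_2$; writing $E_{j^*}$ for the two standard basis vectors indexing this block, $\Lambda(t)/|\Lambda(t)|_F\to\tfrac{\sqrt2}{2}E_{j^*}E_{j^*}^T$, and therefore $X(t)/|X(t)|_F\to\tfrac{\sqrt2}{2}U_{j^*}U_{j^*}^T$, where $U_{j^*}=UE_{j^*}$ is the $n\times2$ matrix of the two columns of $U$ attached to $B_{j^*}$.

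The routine parts are the block decoupling, the explicit scalar solution, and the norm computations. The only genuinely delicate point is controlling the relative sizes of the blocks near $\bar t$: one must be sure the non-minimizing blocks really do stay bounded (scalar blocks with $a_i>0$ but $1/a_i>\bar t$ remain finite on the compact interval $[0,\bar t]$, and likewise for $2\times2$ blocks with later escape times), so that they contribute nothing to the normalized limit, and one must track the sign of the blow-up of the dominant block so that it produces $+I_2$ rather than $-I_2$. This is precisely where the reduction to $\beta_{j^*}>0$ and the fact that $\beta_{j^*}t+\phi_{j^*}\nearrow\pi/2$ (hence $s(t)\to+\infty$, not $-\infty$) matter, and it accounts for the sign discrepancy with the $X^2$ case (Lemma~\ref{2by2-X^2}, where the limiting direction is $-\tfrac{\sqrt2}{2}I_2$).
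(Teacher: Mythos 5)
Your proposal is correct and follows essentially the same route as the paper: block-diagonalize via $U$ as in Eq.~\ref{diagonalizable}, solve $\dot\Lambda=\Lambda\Lambda^T$ blockwise using the scalar solution Eq.~\ref{equ:scalar} and Lemma~\ref{2by2}, and pass to the limit of $X(t)/|X(t)|_F=U(\Lambda(t)/|\Lambda(t)|_F)U^T$ via Eq.~\ref{norm-invar}, with the unique dominating block producing $U_{i^*}U_{i^*}^T$ or $\tfrac{\sqrt2}{2}U_{j^*}U_{j^*}^T$. Your explicit handling of the boundedness of the non-minimizing blocks and the $\beta_{j^*}>0$ normalization only makes explicit what the paper leaves implicit.
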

\begin{proof}
Consider the initial value problem:
$$
{\dot \Lambda}=\Lambda \Lambda^T,\; \Lambda(0)=\Lambda_0.
$$
By Lemma $\ref{2by2}$ its solution is given by
$$
\Lambda(t)=\begin{pmatrix}
\frac{a_1}{1-a_1t}&\dots&0 &0&\dots& 0\\
\vdots&  \ddots &\vdots& \vdots&\ddots&\vdots\\
0&\dots& \frac{a_k}{1-a_kt}&0&\dots&0\\
0&\dots&0& X_1(t)&\dots&0\\
\vdots&\ddots&\vdots&\vdots&\ddots&\vdots\\
0&\dots&0&0&\dots&X_l(t)
\end{pmatrix},
$$
where for all $j=1,\dots,l$, $X_j(t)$ is given by the $2\times 2$ matrix in
~\ref{2by2-sol} with $\beta$, $\phi$ and ${\bar t}$ replaced by $\beta_j$,
$\phi_j$ and ${\bar t}_j$ respectively. This clearly shows that $\Lambda (t)$ is
defined in forward time for $t$ in $[0,{\bar t})$. Since the solution of
~\ref{flow-XX^T} is given by $X(t)=U\Lambda(t)U^T$, $X(t)$ is also defined in
forward time for $t$ in $[0,{\bar t})$.  It follows from~\ref{norm-invar}
that
$$
\frac{X(t)}{|X(t)|_F}=U\frac{\Lambda(t)}{|\Lambda (t)|_F}U^T.
$$
If $i^*\in\{1,\dots,k\}$ is the unique value such that ${\bar t}={\bar t}_{i^*}$, then
\begin{align*}
\lim_{t\rightarrow {\bar t}_i^*}\frac{X(t)}{|X(t)|_F}
  &=U\lim_{t\rightarrow {\bar t}_i^*}\frac{\Lambda(t)}{|\Lambda (t)|_F}U^T\\
  &= Ue_{i^*}e_{i^*}^TU^T=U_{i^*}U_{i^*}^T,
\end{align*}
where $e_{i^*}$ denotes the $i^*$th standard unit basis vector of $\reals^n$.

If $j^*\in \{1,\dots,l\}$  is the unique value such that ${\bar t}={\bar t}_{j^*}$, then by Lemma $\ref{2by2}$:
\begin{align*}
\lim_{t\rightarrow {\bar t}_j^*}\frac{X(t)}{|X(t)|_F}
  &=U\lim_{t\rightarrow {\bar t}_j^*}\frac{\Lambda(t)}{|\Lambda (t)|_F}U^T\\
  &=\frac{\sqrt{2}}{2}UE_{j^*}U^T=\frac{\sqrt{2}}{2}U_{j^*}U_{j^*}^T,
\end{align*}
where $E_{j^*}$ has exactly two non-zero entries equal to $1$ on the diagonal positions corresponding 
to the block $B_{j^*}$ in $\Lambda_0$.
\end{proof}

A particular consequence of Theorem $\ref{main-XX^T}$ is that if $X_0$ has a
complex pair of eigenvalues, the solution of~\ref{flow-XX^T} always blows up
in finite time, even if all real eigenvalues of $X_0$ are non-positive. Recall
that the solution of~\ref{flow-X^2} blows up in finite time, if and only if
$X_0$ has a positive, real eigenvalue. Another implication of Theorem
$\ref{main-XX^T}$ is that if blow-up occurs, it may be due to a real eigenvalue
of $X_0$, or to a complex eigenvalue. In contrast, if the solution of
~\ref{flow-X^2} blows up in finite time, it is necessarily due to a positive,
real eigenvalue, and never to a complex eigenvalue. When the solution of
~\ref{flow-XX^T} blows up because of a positive, real eigenvalue of $X_0$,
the system will achieve balance, just as in the case of system
~\ref{flow-X^2}. If on the other hand, finite time blow up of
~\ref{flow-XX^T} is caused by a complex eigenvalue of $X_0$, we show that in
general one cannot expect to achieve a balanced network. Assume there is a
unique $j^*$ such that:
$$ \lim_{t\rightarrow {\bar t}_j^* -}\frac{X(t)}{|X(t)|_F}=\frac{\sqrt{2}}{2}U_{j^*}U_{j^*}^T,$$
Assuming that no entry of $U_{j^*}$ is zero, the sign pattern of
$U_{j^*}U_{j^*}^T$, with 
$$U_j^* = 
\begin{pmatrix}
p_1& q_1 \\
p_2& -q_2 \\
-p_3& q_3 \\
-p_4& -q_4
\end{pmatrix}$$
is given by:
$$
\begin{pmatrix}
+&?&?&- \\
?&+&-&? \\
?&-&+&? \\
-&?&?&+
\end{pmatrix},
$$
up to a suitable permutation, where all $p_i$ and $q_i$, $i=1,\dots,4$, are
entrywise positive vectors, and where
$$
\langle p_1,q_1 \rangle + \langle p_4,q_4 \rangle = \langle p_2,q_2 \rangle +
\langle p_3,q_3 \rangle ,
$$
because $U$ is an orthogonal matrix. The $?$ are not entirely arbitrary because
$U_{j^*}U_{j^*}^T$ is a symmetric matrix, but besides that their signs can be
arbitrary.

\subsection*{Generic initial condition}
Consider
\begin{equation}\label{flow}
{\dot X}=XX^T, X(0)=X_0,
\end{equation}
where $X$ is a real $n\times n$ matrix, which is not necessarily normal.

We first decompose the flow~\ref{flow} into  flows for the symmetric and
skew-symmetric parts of $X$.  Let $X=S+A$,$X_0=S_0+A_0$, where $S,S_0\in {\cal
S}$ and $A,A_0\in {\cal A}$ are the unique symmetric and skew-symmetric parts of
$X$ and $X_0$ respectively.  If $X(t)$ satisfies~\ref{flow}, then it can be
verified that $S(t)$ and $A(t)$ satisfy the system:
\begin{align}
{\dot S}&=(S+A)(S-A),\; S(0)=0,\label{sym-flow}\\
{\dot A}&=0,\; A(0)=A_0,\label{A-part}
\end{align}
Consequently, $A(t)=A_0$ for all $t$, and thus the skew-symmetric part of the
solution $X(t)$ of~\ref{flow} remains constant and equal to $A_0$.
Throughout this subsection we assume that $A_0\neq 0$, for otherwise $X(0)$ is
symmetric, hence normal, and the results from the previous subsection apply.  It
follows that we only need to understand the dynamics of the symmetric part.
Then the solution $X(t)$ to~\ref{flow} is given by $X(t)=S(t)+A_0$, where
$S(t)$ solves~\ref{sym-flow}, and in view of ~\ref{perp}, there follows
by Pythagoras' Theorem that: $|X(t)|_F^2=|S(t)|_F^2+|A_0|_F^2$, and thus 
\begin{equation}\label{pyt}
\frac{X(t)}{|X(t)|_F}=\frac{S(t)+A_0}{\left( |S(t)|_F^2+|A_0|_F^2 \right)^{\frac{1}{2}}}.
\end{equation}

Next we shall derive an explicit expression for the solution $S(t)$ of
~\ref{sym-flow}.  We start by performing a change of variables:
\begin{equation}\label{S-hat}
{\hat S}(t)=\exponent^{-tA_0}S(t)\exponent^{tA_0}.
\end{equation}
This yields the equation
\begin{equation}\label{hat-eqn}
{\dot {\hat S}}={\hat S}^2-A_0^2,\;\; {\hat S}(0)=S_0.
\end{equation}
We perform a further transformation which diagonalizes $-A_0^2$: Let $V$ be an
orthogonal matrix such that $-V^TA_0^2V=D^2,$ where
$D:=\diag(0,\omega_1 I_2,\ldots,\omega_k I_k)$
where $k\geq 1$ (because $A_0 \neq 0$) and all $\omega_j> 0$ without
loss of generality. Setting 
\begin{equation}\label{S-tilde}
{\tilde S}=V^T{\hat S}V,
\end{equation} 
and multiplying equation~\ref{hat-eqn} 
by $V$ on the left, and by $V^T$ on the right, we find that:
\begin{equation}\label{tilde-eqn}
{\dot {\tilde S}}={\tilde S}^2+D^2,\;\; {\tilde S}(0)={\tilde S_0}:=V^TS_0V.
\end{equation}
Notice that this is a matrix Riccati differential equation, a 
class of equations with specific properties which  are briefly reviewed next. 

Consider a general matrix Riccati differential equation:
\begin{equation}\label{RDE}
{\dot S}=SMS-SL-L^TS+N,
\end{equation}
where $M=M^T$,$N=N^T$ and $L$ arbitrary, defined on ${\cal S}$. Associated to
this equation is a linear system
\begin{equation}\label{Ham-sys}
\begin{pmatrix}
{\dot P}\\
{\dot Q}
\end{pmatrix}
=
H
\begin{pmatrix}
P\\
Q
\end{pmatrix},\;\; H:=\begin{pmatrix}
L& -M\\
N&-L^T
\end{pmatrix},
\end{equation}
where $H$ is a Hamiltonian matrix, i.e. $J_{2n}H=(J_{2n}H)^T$ holds, where
$J_{2n}$ is as defined in Eq.~\ref{skew_J}.
The following fact is well-known.
\begin{lemma}\label{ratio-lemma}
  Let $\left( \begin{smallmatrix} P(t)\\ Q(t) \end{smallmatrix} \right)$ be a
  solution of~\ref{Ham-sys}. Then, provided that $P(t)$ is non-singular,
  \begin{equation}\label{ratio}
  S(t)=Q(t)P(t)^{-1},
  \end{equation}
  is a solution of~\ref{RDE}.  Conversely, if $S(t)$ is a solution of
 ~\ref{RDE}, then there exists a solution 
  $\left(\begin{smallmatrix}P(t)\\ Q(t) \end{smallmatrix} \right)$ 
  of~\ref{Ham-sys} such that~\ref{ratio} holds, provided that $P(t)$ is
  non-singular.
\end{lemma}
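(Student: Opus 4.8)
The plan is to establish both implications by direct differentiation, the only nontrivial ingredient being the product rule for the derivative of a matrix inverse, $\frac{d}{dt}P(t)^{-1} = -P(t)^{-1}\dot P(t)\,P(t)^{-1}$, which is valid wherever $P(t)$ is nonsingular.

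For the forward implication I would start from $S = QP^{-1}$ and differentiate, obtaining $\dot S = \dot Q P^{-1} - Q P^{-1}\dot P P^{-1} = \dot Q P^{-1} - S\dot P P^{-1}$. Substituting the two block rows of \ref{Ham-sys}, namely $\dot P = LP - MQ$ and $\dot Q = NP - L^T Q$, gives $\dot Q P^{-1} = N - L^T S$ and $S\dot P P^{-1} = SL - SMS$, and collecting these terms reproduces \ref{RDE} exactly. It is worth remarking that this computation uses neither the symmetry of $M$ and $N$ nor the Hamiltonian structure of $H$; those hypotheses only serve to guarantee that $S(t)$ stays in $\cal S$ when $S_0$ does, which is not part of the present statement.

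For the converse, given a solution $S(t)$ of \ref{RDE} on some interval, I would define $P(t)$ as the solution of the linear matrix equation $\dot P = (L - MS(t))P$, $P(0)=I_n$; since the coefficient $L - MS(t)$ is continuous on that interval, $P(t)$ exists there and, having a nonsingular initial value, is invertible at least on a neighbourhood of $t=0$ (and on whatever subinterval one restricts to, as the statement presupposes). Setting $Q(t) := S(t)P(t)$ makes \ref{ratio} hold by construction, so it remains only to check that $(P,Q)$ solves \ref{Ham-sys}. The first block row is immediate, $\dot P = LP - MSP = LP - MQ$. For the second, $\dot Q = \dot S P + S \dot P = (\dot S + SL - SMS)P$, and replacing $\dot S$ by the right-hand side of \ref{RDE} collapses the bracket to $N - L^T S$, so that $\dot Q = NP - L^T Q$, as required.

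The point requiring care, rather than a genuine obstacle, is the invertibility of $P(t)$: in the forward direction it is hypothesised outright, and in the converse direction it is automatic near $t=0$ because $P$ solves a linear system with nonsingular initial data, so that $S = QP^{-1}$ is well defined on the relevant interval. Beyond that, the proof is entirely routine manipulation with the product and inverse-derivative rules.
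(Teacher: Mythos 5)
Your proof is correct. The forward direction is essentially the paper's argument: both amount to differentiating the relation $Q=SP$ (you differentiate $S=QP^{-1}$ via the inverse-derivative rule, the paper differentiates $SP=Q$), substituting the block rows $\dot P=LP-MQ$, $\dot Q=NP-L^TQ$, and recovering~\ref{RDE}; your remark that neither the symmetry of $M,N$ nor the Hamiltonian structure is needed here is accurate. The converse, however, is genuinely different. The paper takes $\left(\begin{smallmatrix}P\\ Q\end{smallmatrix}\right)$ to be the solution of~\ref{Ham-sys} with initial data $\left(\begin{smallmatrix}I_n\\ S(0)\end{smallmatrix}\right)$, repeats the forward computation to show $QP^{-1}$ solves~\ref{RDE}, and then invokes uniqueness of solutions of the \emph{nonlinear} Riccati equation to conclude $S=QP^{-1}$. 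You instead construct the pair from $S$ itself: solve the linear equation $\dot P=(L-MS(t))P$, $P(0)=I_n$, set $Q=SP$ so that~\ref{ratio} holds by fiat, and verify~\ref{Ham-sys} directly using~\ref{RDE}. Your route buys two things: it avoids the uniqueness argument for the nonlinear equation (only existence-uniqueness for a linear system is used), and it gives invertibility of $P$ for free — indeed more than you claim, since $P$ is a fundamental solution of a linear system and by Liouville's formula is nonsingular on the entire interval where $S$ exists, not merely near $t=0$. The paper's route is marginally shorter and directly exhibits the Ham-sys solution with initial data $(I_n,S(0))$, which is the normalization used later in the explicit exponentiation of $H$; note that your constructed pair satisfies~\ref{Ham-sys} with the same initial data, so by uniqueness for the linear system the two constructions produce the same $(P,Q)$.
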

\begin{proof}
Taking derivatives in $S(t)P(t)=Q(t)$ yields that ${\dot S}=({\dot Q}-S{\dot P})P^{-1}$, and using~\ref{Ham-sys},
$${\dot S}=(NP-L^TQ-S(LP-MQ))P^{-1}=N-L^TS-SL+SMS,$$
showing that $S(t)$ solves~\ref{RDE}. For the converse, let $S(t)$ be a
solution of~\ref{RDE}. Let  
$\left( \begin{smallmatrix}P(t)\\ Q(t) \end{smallmatrix} \right)$ 
with
$\left( \begin{smallmatrix}P(0)\\ Q(0) \end{smallmatrix} \right)=
  \left( \begin{smallmatrix}I_n\\ S(0) \end{smallmatrix} \right)$ 
be the solution of~\ref{Ham-sys}. Then
\begin{align*}
&\frac{d}{dt}\left(Q(t)P^{-1}(t)\right)\\
=&{\dot Q}P^{-1}-QP^{-1}{\dot P}P^{-1}\\
=&(NP-L^TQ)P^{-1}-QP^{-1}(LP-MQ)P^{-1}\\
=&(QP^{-1})M(QP^{-1})-(QP^{-1})L-L^T(QP^{-1})+N,
\end{align*}
which implies that $QP^{-1}$ is a solution to~\ref{RDE}. Since
$S(0)=Q(0)P^{-1}(0)$, it follows from uniqueness of solutions that
$S(t)=Q(t)P^{-1}(t)$.
\end{proof}
In other words, in principle we can solve the nonlinear equation~\ref{RDE}
by first solving the linear system~\ref{Ham-sys}, and then use formula
~\ref{ratio} to determine the solution of~\ref{RDE}. 

We carry this out for our particular Riccati equation~\ref{tilde-eqn} which
is of the form~\ref{RDE} if
$M=I_n,\;\; L=0,\;\; N=D^2$.
The corresponding Hamiltonian is
$H=\left( \begin{smallmatrix}
0& -I_n\\
D^2& 0
\end{smallmatrix} \right)$.
We partition $D$ in singular and non-singular parts:
$$
D=\begin{pmatrix}
0& 0\\
0& {\tilde D}
\end{pmatrix},\textrm{ where }{\tilde D}:=\begin{pmatrix}\omega_1 I_2 &\dots&0\\
\vdots&\ddots&\vdots\\
0&\dots&\omega_kI_2
\end{pmatrix},
$$
where ${\tilde D}$ is positive definite since all $\omega_j > 0$. Partitioning $H$ correspondingly:
\begin{equation}\label{Ham-ex}
H=\begin{pmatrix}
0 & 0 & \vline& -I_{n-2k}& 0 \\
0&  0&  \vline&0& -I_{2k}\\
\hline 
 0& 0&\vline& 0& 0\\
  0& {\tilde D}^2&\vline& 0&  0
\end{pmatrix}.
\end{equation}
This matrix is then exponentiated to solve system~\ref{Ham-sys}:
$$
\begin{pmatrix}
P(t)\\
\hline
Q(t)
\end{pmatrix}=
\begin{pmatrix}
I_{n-2k}&0&\vline&-tI_{n-2k}&0\\
0&c&\vline&0&-{\tilde D}^{-1}s\\
\hline
0&0&\vline&I_{n-2k}&0\\
0&{\tilde D}s&\vline&0&c
\end{pmatrix}
\begin{pmatrix}
P(0)\\
\hline
Q(0)
\end{pmatrix},
$$
where we have introduced the following notation:
$$s(t):=\diag(\sin(\omega_1 t) I_2, \ldots, \sin(\omega_k t) I_2)
     =\sin({\tilde D}t),$$
and similarly $c(t)=\cos({\tilde D}t)$.  By setting $P(0)=I_n$, and
$Q(0)={\tilde S}_0$, and using Lemma $\ref{ratio-lemma}$, it follows that the
solution of the initial value problem~\ref{tilde-eqn} is given by ${\tilde
S}(t)=Q(t)P(t)^{-1}$, 
\begin{equation}\label{sol-flow}
\begin{pmatrix}
P(t)\\
\hline 
Q(t)
\end{pmatrix}=
\begin{pmatrix}
\begin{pmatrix}
  (I_{n-2k} - t)\tilde{S}_0 &0\\
  0& c(t) - {\tilde D}^{-1}s(t){\tilde S_0}\\
\end{pmatrix}\\
\hline
\begin{pmatrix}
I_{n-2k}{\tilde S_0}&0\\
0& {\tilde D}s(t) + c(t){\tilde S_0}
\end{pmatrix}
\end{pmatrix},
\end{equation}
for all $t$ for which $P(t)$ is non-singular. We now make the following
assumption:

\begin{assumption}
The matrix $P(t)$ is non-singular for all $t$ in $[0,{\bar t})$, where ${\bar
t}$ is finite and such that $s(t)$ is non-singular for all $t$ in $(0,{\bar
t})$. Moreover, $P({\bar t})$ has rank $n-1$, or equivalently, has a simple
eigenvalue at zero.
\label{assum}
\end{assumption}

Later we will show that this assumption is generically satisfied, and also that
\begin{equation}
   t^*={\bar t},
   \label{equalt}
\end{equation}
where $[0,t^*)$ is the maximal forward interval of existence of
the solution ${\tilde S}(t)$ of the initial value problem~\ref{tilde-eqn}.
Consequently, the theory of ODE's implies that $\lim_{t\rightarrow {\bar t}}|{\tilde S}(t)|_F=+\infty$, i.e. that 
${\bar t}$ is the blow-up time for the solution ${\tilde S}(t)$.

Assuming for the moment that assumption~\ref{assum} is satisfied,
back-transformation using~\ref{S-hat} and~\ref{S-tilde}, yields that the
solution $S(t)$ of~\ref{sym-flow} is $S(t)=\exponent^{tA_0}V{\tilde
S}(t)V^T\exponent^{-tA_0}$, which is defined for all $t$ in $[0,{\bar t})$,
because $\exponent^{tA_0}V$ is bounded for all $t$ (as it is an orthogonal
matrix). It follows from~\ref{norm-invar} that
\begin{equation}\label{intermed1}
\lim_{t\rightarrow {\bar t}}\frac{S(t)}{|S(t)|_F}=\exponent^{{\bar t}A_0}V
\left(\lim_{t\rightarrow {\bar t}} \frac{{\tilde S}(t)}{|{\tilde S}(t)|_F}\right)
V^T\exponent^{-{\bar t}A_0},
\end{equation}
provided that at least one of the two limits exists. 
Partitioning ${\tilde S_0}$ in~\ref{sol-flow} as follows:
$$
{\tilde S_0}=\begin{pmatrix}({\tilde S_0})_{11}&({\tilde S_0})_{12}\\ 
({\tilde S_0})_{12}^T&({\tilde S_0})_{22}
 \end{pmatrix},\textrm{ with } 
 \begin{array}{c}
   ({\tilde S_0})_{11}= ({\tilde S_0})_{11}^T \\
   ({\tilde S_0})_{22}= ({\tilde S_0})_{22}^T
 \end{array},
$$
we can rewrite $P(t)$ and $Q(t)$ on the time interval $(0,{\bar t})$ as:
$P(t)=\Delta (t) M(t)$ with,
$$\Delta(t)=
\begin{pmatrix}
tI_{n-2k}&0\\
0&{\tilde D}^{-1}s(t)
\end{pmatrix},$$
and
$$M(t)=\begin{pmatrix}
1/t-({\tilde S_0})_{11}& -({\tilde S_0})_{12}\\
-({\tilde S_0})_{12}^T& {\tilde D}c(t)s^{-1}(t)-({\tilde S_0})_{22}
\end{pmatrix}=M^T(t), $$
and
$$Q(t)=\begin{pmatrix}
({\tilde S_0})_{11}&({\tilde S_0})_{12}\\
c(t)({\tilde S_0})_{12}^T& {\tilde D}s(t)+c(t)({\tilde S_0})_{22}
\end{pmatrix}. $$
Note that the factorization of $P(t)$ is well-defined on $(0,{\bar t})$ because
by assumption~\ref{assum}, the matrix $s(t)$ is non-singular in the
interval $(0,{\bar t})$. Moreover, assumption~\ref{assum} also
implies there exists a nonzero vector $u$ corresponding to the zero eigenvalue
of $M({\bar t})$, i.e. $M({\bar t})u=0$, and that $u$ is uniquely defined up to
scalar multiplication because the zero eigenvalue is simple. More explicitly,
partitioning $u$ as 
$\left( \begin{smallmatrix}u_1\\u_2 \end{smallmatrix} \right)$, 
there holds that
\begin{equation}\label{explicit}
\begin{pmatrix}
1/{\bar t}-({\tilde S_0})_{11}& -({\tilde S_0})_{12}\\
-({\tilde S_0})_{12}^T&{\tilde D}c({\bar t})s^{-1}({\bar t})-({\tilde S_0})_{22}
\end{pmatrix}\begin{pmatrix}u_1\\u_2 \end{pmatrix}=0.
\end{equation}
Notice that $M(t)$ is at least real-analytic on the interval $(0,{\bar t})$. Hence, it follows from
\cite{Kato1995} (see also \cite{Bunse-Gerstner1991,Still2001}), that there is 
an orthogonal matrix $U(t)$, and a diagonal matrix $\Lambda(t)$, both real-analytic on $(0,{\bar t})$, such that:
$M(t)=U(t)\Lambda(t)U^T(t)$, for $t\in (0,{\bar t}),$ and thus
$M^{-1}(t)=U(t)\Lambda^{-1}(t)U^T(t)$, for $t\in (0,{\bar t})$.
Returning to ~\ref{intermed1}, we obtain that:
\begin{align*}
  & \lim_{t\rightarrow {\bar t}}\frac{S(t)}{|S(t)|_F} \\
  =&
  \exponent^{{\bar t}A_0}V
  \lim_{t\rightarrow {\bar t}}\frac{Q(t)U(t)\Lambda^{-1}(t)U^T(t)\Delta^{-1}(t)}{|Q(t)U(t)\Lambda^{-1}(t)U^T(t)\Delta^{-1}(t)|_F}
  V^T\exponent^{-{\bar t}A_0} \\
  =& \exponent^{{\bar t}A_0}V
  \frac{Q({\bar t})uu^T\Delta^{-1}(t)}{|Q({\bar t})uu^T\Delta^{-1}(t)|_F}
  V^T\exponent^{-{\bar t}A_0}.
\end{align*}
Here, we have used the fact that $M^{-1}(t)$ is positive definite on the
interval $(0,{\bar t})$, so that its largest eigenvalue (which is simple for all
$t<{\bar t}$ sufficiently close to ${\bar t}$, because of assumption
\ref{assum} approaches $+\infty$ -and not $-\infty$- as
$t\rightarrow {\bar t}$.  To see this, note that from its definition follows that $M(t)$ is positive definite for
all sufficiently small $t>0$, because ${\tilde D}$ is positive definite. Moreover, $M(t)$ is non-singular on $(0,{\bar t})$
since by assumption $({\bf A})$, $P(t)$ is non-singular on $(0,{\bar t})$, and
because $M(t)=\Delta^{-1}(t)P(t)$ (it is clear from its definition and
assumption~\ref{assum} that $\Delta(t)$ is non-singular on $(0,{\bar
t})$ as well). Consequently, the smallest eigenvalue of $M(t)$ remains positive
in $(0,{\bar t})$, as it approaches zero as $t\rightarrow {\bar t}$. This implies
that the largest eigenvalue of $M^{-1}(t)$ is positive on $(0,{\bar t})$, and
approaches $+\infty$ as $t\rightarrow {\bar t}$, as claimed.

Note that:
\begin{align*}
Q({\bar t})u&=\begin{pmatrix}
({\tilde S_0})_{11}&({\tilde S_0})_{12}\\
c({\bar t})({\tilde S_0})^T_{12}&{\tilde D}s({\bar t})+c({\bar t})({\tilde S_0})_{22}
\end{pmatrix}\begin{pmatrix}u_1\\u_2\end{pmatrix}\\
  &= \begin{pmatrix}
(1/{\bar t})u_1\\
{\tilde D}s^{-1}({\bar t})u_2
\end{pmatrix}=\Delta^{-1}({\bar t})u,
\end{align*}
where in the second equality, we used the second row of  ~\ref{explicit} , multiplied by $c({\bar t})$.
From this follows that 
\begin{align*}
  \lim_{t\rightarrow {\bar t}}\frac{S(t)}{|S(t)|_F} &=
  \exponent^{{\bar t}A_0}V
  \frac{\Delta^{-1}({\bar t})uu^T\Delta^{-1}({\bar t})}{|\Delta^{-1}({\bar t})uu^T\Delta^{-1}({\bar t})|_F}
  V^T\exponent^{-{\bar t}A_0} \\
  &=
  \frac{ww^T}{|ww^T|_F},
\end{align*}
where $w := e^{\bar{t} A_0} V \Delta^{-1}({\bar t})u$.

Taking limits for $t\rightarrow {\bar t}$ in ~\ref{pyt}, and using the above equality, we finally arrive at the following
result, which implies that system~\ref{flow} evolves to a socially balanced
state (in normalized sense) when $t\rightarrow {\bar t}$:
\begin{proposition}\label{main-proposition}
Suppose that assumption~\ref{assum} holds and $A_0 \neq 0$. Then the solution 
$X(t)$ of~\ref{flow} satisfies:
$$
\lim_{t\rightarrow {\bar t}}\frac{X(t)}{|X(t)|_F}=\frac{ww^T}{|ww^T|_F}.
$$
with $w = e^{\bar{t} A_0} V \Delta^{-1}({\bar t})u$.
\end{proposition}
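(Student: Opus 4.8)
The plan is to read the result off from the Pythagorean identity~\ref{pyt} together with the limit of the normalized symmetric part that has just been computed above. The only genuinely new ingredient is that the constant skew-symmetric part $A_0$ becomes negligible relative to $S(t)$ as $t\to\bar t$, so it disappears from the normalized limit; everything else is already in place.

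The first thing I would pin down is that $\bar t$ is the actual blow-up time of $S(t)$, i.e. that $|S(t)|_F\to+\infty$ as $t\to\bar t$. This follows from the analysis preceding the statement: writing $P(t)=\Delta(t)M(t)$ with $M(t)$ symmetric, real-analytic, positive definite for small $t>0$ and non-singular on $(0,\bar t)$, Assumption~\ref{assum} forces the smallest eigenvalue of $M(t)$ to stay positive on $(0,\bar t)$ and to approach $0$ as $t\to\bar t$, while $Q(\bar t)u=\Delta^{-1}(\bar t)u\neq 0$; hence the dominant term of $\tilde S(t)=Q(t)M^{-1}(t)\Delta^{-1}(t)$ has norm tending to $+\infty$, and since $\tilde S$, $\hat S$, $S$ are related by the Frobenius-norm–preserving (by~\ref{norm-invar}) orthogonal conjugations~\ref{S-hat} and~\ref{S-tilde}, so does $|S(t)|_F$. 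The same computation identifies $\lim_{t\to\bar t}S(t)/|S(t)|_F=ww^T/|ww^T|_F$ with $w=e^{\bar t A_0}V\Delta^{-1}(\bar t)u$, and shows $w\neq 0$ (because $u\neq 0$, $\Delta(\bar t)$ is non-singular, and $e^{\bar t A_0}V$ is invertible), so the right-hand side of the claimed identity is well defined.

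Finally I would rewrite~\ref{pyt} for the solution $X(t)$ of~\ref{flow} as
$$\frac{X(t)}{|X(t)|_F}=\frac{|S(t)|_F}{\bigl(|S(t)|_F^2+|A_0|_F^2\bigr)^{1/2}}\cdot\frac{S(t)}{|S(t)|_F}+\frac{A_0}{\bigl(|S(t)|_F^2+|A_0|_F^2\bigr)^{1/2}},$$
and pass to the limit $t\to\bar t$: since $|A_0|_F$ is a fixed constant and $|S(t)|_F\to+\infty$, the scalar prefactor of the first term tends to $1$, the last term tends to the zero matrix, and the middle factor tends to $ww^T/|ww^T|_F$ by the previous step; combining these gives the stated limit. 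I expect the only real subtlety to be the bookkeeping in the middle paragraph — making precise that $\bar t$ is a genuine blow-up time of $S$ rather than merely a degeneracy time of the auxiliary matrix $P(t)$, which requires ruling out cancellations in $Q(t)M^{-1}(t)\Delta^{-1}(t)$; the positivity of the dominant eigenvalue of $M^{-1}(t)$ together with the identity $Q(\bar t)u=\Delta^{-1}(\bar t)u\neq 0$ is exactly what settles this, and once $|S(t)|_F\to+\infty$ is secured the rest is an elementary limit.
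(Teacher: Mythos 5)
Your proposal is correct and follows essentially the same route as the paper: it reuses the preceding Riccati analysis ($P=\Delta M$, the analytic eigendecomposition of $M$, and the identity $Q({\bar t})u=\Delta^{-1}({\bar t})u\neq 0$) to get $\lim_{t\to\bar t}S(t)/|S(t)|_F=ww^T/|ww^T|_F$, and then passes to the limit in Eq.~\ref{pyt} so that the constant $A_0$ is washed out by $|S(t)|_F\to+\infty$. The only difference is presentational: you spell out the blow-up of $|S(t)|_F$ explicitly at this point, whereas the paper partly defers that justification (via the equality $t^*={\bar t}$) to the genericity subsection, so your version is, if anything, slightly more self-contained.
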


\subsection*{Genericity}

Generically, assumption~\ref{assum} holds, and~\ref{equalt} holds
as well. There are two aspects to assumption~\ref{assum}:
\begin{enumerate}
\item
The matrix $P(t)$ is nonsingular in the interval $[0,{\bar t})$, but singular at some finite ${\bar t}$ such that:
\begin{equation}\label{vlug}
{\bar t}<\min_{j=1,\dots,k}\frac{\pi}{\omega_j}.
\end{equation}
\item $P({\bar t})$ has a simple zero eigenvalue.
\end{enumerate}
To deal with the first item, suppose that the solution ${\tilde S}(t)$ of~\ref{tilde-eqn} is defined for all $t\in [0,t^*)$ for some finite positive $t^*$. By Lemma $\ref{ratio-lemma}$, there exist $P(t)$ and $Q(t)$ such that 
${\tilde S}(t)=Q(t)P^{-1}(t)$, where $P(t)$ and $Q(t)$ are components of the solution of system~\ref{Ham-sys} 
with $H$ defined in~\ref{Ham-ex}. 
Then necessarily ${\bar t}\leq t^*$. Thus, if we can show that $t^*<\min_j \pi/\omega_j$, then~\ref{vlug} holds.
To show that $t^*<\min_j \pi/\omega_j$, we rely on a particular property of matrix Riccati differential equations 
~\ref{RDE}: their solutions preserve the order generated by the cone of non-negative symmetric matrices, see 
\cite{DeLeenheer2004}. More 
precisely, if $S_1(t)$ and $S_2(t)$ are solutions of~\ref{RDE}, and if 
$S_1(0)\preceq S_2(0),$
then $S_1(t)\preceq S_2(t),$ for all $t \geq 0$ for which both solutions are
defined. The partial order notation $S_1(t)\preceq S_2(t)$ means that the
difference $S_2(t)-S_1(t)$ is a positive semi-definite matrix.

We apply this to equation~\ref{tilde-eqn} with ${\tilde
S_1}(0)=\alpha_{\min} I_n$ and ${\tilde S_2}(0)={\tilde S}(0)$,
where we choose $\alpha_{\min}$ as the smallest eigenvalue of ${\tilde S}(0)$
(or equivalently, of $S(0)=S_0$, since ${\tilde S}(0)=V^TS_0V$), so that
clearly ${\tilde S_1}(0)\preceq {\tilde S_2}(0)$.
Consequently, by the monotonicity property of system~\ref{tilde-eqn}, it
follows that ${\tilde S_1}(t)\preceq {\tilde S}(t)$, as long as both solutions
are defined. We can calculate the blow-up time $t_1^*$ of ${\tilde S_1}(t)$
explicitly, and then it follows that $t^*\leq t_1^*$, where $t^*$ is the blow-up
time of ${\tilde S}(t)$. Indeed, equations of system~\ref{tilde-eqn} decouple
for an initial condition of the form $\alpha_{\min}I_n$, and the resulting
scalar equations are scalar Riccati equations we have solved before. The blow-up
time for ${\tilde S_1}(t)$ is given by:
$$
t_1^*=\begin{cases}
\min_{j=1,\dots,k}\left(\frac{\pi}{2\omega_j}-\frac{\phi_j}{\omega_j}
  \right),\;\; \textrm{ if } \alpha_{\min}\leq 0\\
\min_{j=1,\dots,k}\left(\frac{1}{\alpha_{\min}},\frac{\pi}{2\omega_j}-\frac{\phi_j}{\omega_j}
  \right),\;\; \textrm{ if } \alpha_{\min}>0\
\end{cases}.
$$
with $\phi_j:=\arctan \left(\frac{\alpha_{\min}}{\omega_j} \right)\in
\left(-\frac{\pi}{2},\frac{\pi}{2} \right)$.
Notice that for all $j=1,\dots,k$, there holds that
$\frac{\pi}{2\omega_j}-\frac{\phi_j}{\omega_j}<\frac{\pi}{\omega_j},$ because by
definition, $\frac{\phi_j}{\omega_j}\in
(-\frac{\pi}{2\omega_j},\frac{\pi}{2\omega_j})$.
Consequently,
$$
{\bar t}\leq t^*\leq t_1^*<\min_{j=1,\dots,k} \frac{\pi}{\omega_j},
$$
which establishes ~\ref{vlug}. In other words, we have shown that the first item in assumption~\ref{assum} is always satisfied.

The second item in assumption~\ref{assum} may fail, but holds for
generic initial conditions as we show next. For this we first point out that the
derivative of each eigenvalue of $M(t)$ is a strictly decreasing function in the
interval $(0,{\bar t})$, independently of the value of the matrix $\tilde S_0$.
Indeed, the derivative
of eigenvalue $\lambda_j(t)$ of $M(t)$ equals (see \cite{Kato1995}) :
\begin{align*}
  {\dot\lambda}_j(t)&=u_j(t)^T\dot M(t)u_j(t) \\
  &= -u_j(t)^T \begin{pmatrix}
    1/t^2&0\\
    0&{\tilde D}^2s^{-2}(t)
  \end{pmatrix}u_j(t),
\end{align*}
where $u_j(t)$ is the normalized eigenvector of $M(t)$ corresponding to
$\lambda_j(t)$, and which is analytic in the considered interval.  Since $\dot
M(t)$ is negative definite in that interval, $\dot\lambda_j(t)$ is also negative
and hence all eigenvalues of $M(t)$ are strictly decreasing functions of $t$ in
that interval. Suppose now that $M(t)$ has a multiple eigenvalue 0 at $t={\bar
t}$, then $M({\bar t})$ is positive semi-definite since ${\bar t}$ is the first
singular point of $M(t)$ and the eigenvalues are decreasing function of $t$. If
we now choose a positive semi-definite $\Delta_{\tilde S_0}$ of nullity 1, such
that $M({\bar t})+\Delta_{\tilde S_0}$ also has nullity 1, then the perturbed
initial condition $({\tilde S}_0)_p=\tilde S_0-\Delta_{\tilde S_0}$ yields the
perturbed solution ${\tilde S}_p(t)$ which can be factored as
$Q_p(t)P^{-1}_p(t)$, and where $P_p(t)=\Delta(t)M_p(t)$ (note that $\Delta (t)$
remains the same as before the perturbation) for $M_p(t)=M(t)+\Delta_{\tilde
S_0}$ which now has a single root at the same minimal value ${\bar t}$. To
construct such a matrix $\Delta_{\tilde S_0}$ is simple since the only condition
it needs to satisfy is that $M({\bar t})$ and $\Delta_{\tilde S_0}$ have a
common null vector. Those degrees of freedom show that the second item in
assumption~\ref{assum} is indeed generic. 

Now that we have established that~\ref{assum} generically holds, we
show that ~\ref{equalt} is satisfied also.  The proof is by contradiction.
Earlier, we have shown that ${\bar t}\leq t^*$. Thus, if we suppose that
~\ref{equalt} fails, then necessarily ${\bar t}<t^*$.  This implies that
although $P({\bar t})$ is singular, the solution ${\tilde S}(t)$ exists for
$t={\bar t}$. Our goal is to show that $\lim_{t\rightarrow {\bar t}} |{\tilde
S}(t)|_F=+\infty$, which yields the desired contradiction (by the theory of ODE's).

We first claim the following:
\begin{equation}\label{claim}
\textrm{If } u\neq 0\textrm{ and } P({\bar t})u=0,\textrm{ then } Q({\bar t})u\neq 0.
\end{equation}
Indeed, if this were not the case, then there would exist some vector ${\bar
u}\neq 0$ such that $P({\bar t})\bar{u} = 0$ and $Q({\bar t})\bar{u} = 0$. On
the other hand, $P(t)$ and $Q(t)$ are components of the matrix product
$$
\begin{pmatrix}
P(t)\\
Q(t)
\end{pmatrix}=\exponent^{tH}\begin{pmatrix}
I_n\\
{\tilde S}_0
\end{pmatrix},
$$
where $H$ is defined in ~\ref{Ham-ex}. Multiplying the latter in $t={\bar t}$
by ${\bar u}$, and using the previous expression, it follows from the
invertibility of $\exponent^{{\bar t}H}$ that ${\bar u}=0$, a contradiction.
This establishes ~\ref{claim}.

In the previous section, we factored $P(t)$ as $P(t)=\Delta(t) M(t)$.  Since
$P(t)$ is non-singular on $[0,{\bar t})$, and singular at ${\bar t}$, it follows
from ~\ref{vlug} and the definition of $\Delta(t)$, that $M(t)$ is
non-singular (and, in fact, positive definite as shown in the previous section)
on $(0,{\bar t})$, and singular at ${\bar t}$ as well. Therefore, since $M(t)$
is symmetric and real-analytic, it follows from \cite{Kato1995} that we can find
a positive and real-analytic scalar function $\epsilon(t)$, and a real-analytic
unit vector $u(t)$ such that:
$$
M(t)u(t)=\epsilon(t) u(t),\; \epsilon(t)>0 
$$
on $(0,{\bar t})$, $\epsilon({\bar t})=0,\; |u(t)|_2=1$, where $|.|_2$ denotes
the Euclidean norm. In particular,
$M({\bar t})u({\bar t})=0$, and since $\Delta({\bar t})$ is non-singular, it
follows that $P({\bar t})u({\bar t})=0$. Then ~\ref{claim} implies that
$Q({\bar t})u({\bar t})\neq 0$. Define the real-analytic unit vector
$$
v(t)=\frac{\Delta(t)u(t)}{|\Delta(t)u(t)|_2},\;\; t\in (0,{\bar t}),
$$
and calculate
\begin{align*}
\lim_{t\rightarrow {\bar t}}|{\tilde S}(t)v(t)|_2&=\lim_{t\rightarrow {\bar t}}|Q(t)P^{-1}(t)v(t)|_2 \\
%&=\lim_{t\rightarrow {\bar t}}\frac{|Q(t)M^{-1}(t)\Delta^{-1}(t)(\Delta(t)u(t))|_2}{|\Delta(t)u(t)|_2}\\
&=\frac{|Q({\bar t})u({\bar t})|_2}{|\Delta({\bar t})u({\bar t})|_2}\lim_{t\rightarrow {\bar t}}\frac{1}{\epsilon(t)}=+\infty.
\end{align*}
Since for any real $n\times n$ matrix $A$, and for any unit vector $x$ (i.e. $|x|_2=1$) holds that 
$|Ax|_2\leq |A|_F$, it follows that $\lim_{t\rightarrow {\bar t}}|{\tilde
S}(t)|_F=+\infty$. This yields the sought-after contradiction.

By combining Proposition $\ref{main-proposition}$ and the results in this subsection, we have proved the main result concerning the generic emergence of balance for solutions of system ~\ref{flow}.
\begin{theorem}
There exists a dense set of initial conditions $X_0$ in $\reals^{n\times n}$ such that the corresponding solution 
$X(t)$ of~\ref{flow} satisfies: 
$$
\lim_{t\rightarrow {\bar t}}\frac{X(t)}{|X(t)|_F}=\frac{ww^T}{|ww^T|_F}.
$$
with $w = e^{\bar{t} A_0} V \Delta^{-1}({\bar t})u$.
\label{thm:generic}
\end{theorem}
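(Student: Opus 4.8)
The plan is to assemble the ingredients already in place. Proposition~\ref{main-proposition} establishes the claimed limit for \emph{every} initial condition $X_0$ with $A_0\neq 0$ that satisfies Assumption~\ref{assum}, so it suffices to prove that the set of such $X_0$ is dense in $\reals^{n\times n}$. Since the symmetric matrices form the proper linear subspace ${\cal S}$, the set $\{X_0:A_0\neq 0\}$ is open and dense, and we work inside it; fixing $A_0$, the orthogonal matrix $V$, the blocks $\tilde D$, the frequencies $\omega_j$, and the matrices $\Delta(t),s(t),c(t)$ are all determined, and a perturbation of $X_0$ that keeps $A_0$ fixed moves only $\tilde S_0=V^TS_0V$, which ranges freely over symmetric matrices.

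Next I would recall what has already been proved unconditionally. The first clause of Assumption~\ref{assum} --- $P(t)$ nonsingular on $[0,{\bar t})$ and singular at a finite ${\bar t}$ with $s$ nonsingular on $(0,{\bar t})$ --- follows from the monotonicity of the Riccati flow~\ref{tilde-eqn}: comparison with the scalar solution started at $\alpha_{\min}I_n$ gives ${\bar t}\le t^{*}\le t_1^{*}<\min_j\pi/\omega_j$, i.e.\ \ref{vlug}, a bound valid for every value of $\alpha_{\min}$ and hence stable under perturbation of $\tilde S_0$. Similarly, the identity \ref{equalt}, $t^{*}={\bar t}$, was obtained by contradiction from claim~\ref{claim} (if $P({\bar t})u=0$ with $u\neq 0$ then $Q({\bar t})u\neq 0$, itself a consequence of the invertibility of $\exponent^{{\bar t}H}$): once $P({\bar t})$ is singular, $\tilde S(t)$ blows up at ${\bar t}$. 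Thus the only part of Assumption~\ref{assum} in doubt is the simplicity of the zero eigenvalue of $P({\bar t})$, equivalently of $M({\bar t})$ (where $M(t)=\Delta^{-1}(t)P(t)$ on $(0,{\bar t})$), and once it holds, everything else comes for free.

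The crux is therefore a perturbation argument for that simplicity. The dependence of $M(t)$ on $\tilde S_0$ is affine and uniform in $t$: replacing $\tilde S_0$ by $(\tilde S_0)_p=\tilde S_0-\Delta_{\tilde S_0}$ replaces $M(t)$ by $M_p(t)=M(t)+\Delta_{\tilde S_0}$ for all $t$. Since the eigenvalues of $M(t)$ are strictly decreasing on $(0,{\bar t})$ --- their derivatives equal $-u_j(t)^{T}\diag\!\big(1/t^{2},\tilde D^{2}s^{-2}(t)\big)u_j(t)<0$ --- the matrix $M({\bar t})$ is positive semi-definite and ${\bar t}$ is its first singular time. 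For any $\Delta_{\tilde S_0}\succeq 0$ one then has $M_p(t)\succeq M(t)\succ 0$ on $(0,{\bar t})$, so ${\bar t}$ remains the first singular time of $M_p$, and, by positive semi-definiteness of both summands, $\ker M_p({\bar t})=\ker M({\bar t})\cap\ker\Delta_{\tilde S_0}$. One then chooses $\Delta_{\tilde S_0}\succeq 0$ of norm as small as desired whose kernel meets the $d$-dimensional space $\ker M({\bar t})$ in exactly a line --- a hyperplane kernel (``nullity one'') does this when $d=2$, while for $d\ge 3$ one either iterates the construction $d-1$ times or takes $\Delta_{\tilde S_0}$ of rank $d-1$ in general position. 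The perturbed $X_0$ is arbitrarily close to the original and satisfies Assumption~\ref{assum} with the same ${\bar t}$, so Proposition~\ref{main-proposition} applies; since $\exponent^{{\bar t}A_0}V\Delta^{-1}({\bar t})$ is invertible and $u\neq 0$, the limit $ww^{T}/|ww^{T}|_F$ is well defined, which proves density and hence the theorem.

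I expect the main obstacle to be exactly this last step: verifying that $\Delta_{\tilde S_0}$ can be taken small while simultaneously (i) creating no earlier singular time for $P_p$, (ii) leaving ${\bar t}$ unchanged, and (iii) collapsing $\ker M({\bar t})$ to a line, together with the bookkeeping, when $\dim\ker M({\bar t})>2$, that finitely many such small perturbations (or one of the appropriate rank) suffice. The remainder is assembly of results established in the preceding subsections.
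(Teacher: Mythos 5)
Your proposal is correct and takes essentially the same route as the paper: reduce the theorem to density of $A_0\neq 0$ plus Assumption~\ref{assum}, note that item 1 and $t^*={\bar t}$ hold unconditionally (Riccati monotonicity and claim~\ref{claim}), and restore item 2 by an arbitrarily small positive semi-definite perturbation $\Delta_{\tilde S_0}$ of $\tilde S_0$, which shifts $M(t)$ to $M(t)+\Delta_{\tilde S_0}$ uniformly in $t$, keeps ${\bar t}$ as the first singular time, and collapses $\ker M({\bar t})$ to a line, so that Proposition~\ref{main-proposition} applies. One small correction: ``nullity one'' means a one-dimensional kernel (rank $n-1$), not a hyperplane kernel, so the paper's single perturbation with $\ker\Delta_{\tilde S_0}$ a line inside $\ker M({\bar t})$ already yields $\ker\bigl(M({\bar t})+\Delta_{\tilde S_0}\bigr)=\ker M({\bar t})\cap\ker\Delta_{\tilde S_0}$ one-dimensional for any kernel dimension $d$; your rank-$(d-1)$ or iterated construction is a workable but unnecessary detour.
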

\begin{proof}
The set of initial conditions $X_0$ for which $A_0 \neq 0$ and assumption
\ref{assum} holds is dense in $\reals^{n\times n}$.
\end{proof}

% Do NOT remove this, even if you are not including acknowledgments

\end{document}